\documentclass{article}
\usepackage{amsmath,amssymb,amsfonts,amsthm,mathtools,xcolor}
\usepackage[hidelinks,pdfusetitle]{hyperref}
\usepackage{tikz}

\newtheorem{theorem}{Theorem}
\numberwithin{theorem}{section}
\newtheorem{proposition}[theorem]{Proposition}
\newtheorem{lemma}[theorem]{Lemma}

\theoremstyle{remark}
\newtheorem{remark}[theorem]{Remark}
\numberwithin{equation}{section}

\newcommand{\unquad}{\hspace{-1em}}

\newcommand{\Nsusy}{\mathcal{N}}
\newcommand{\CC}{\mathbb{C}}
\newcommand{\RR}{\mathbb{R}}
\newcommand{\QQ}{\mathbb{Q}}
\newcommand{\ZZ}{\mathbb{Z}}

\newcommand{\eps}{\epsilon}

\newcommand{\bei}{\begin{itemize}}\newcommand{\eei}{\end{itemize}}
\newcommand{\bel}{\be\label}
\newcommand{\be}{\begin{equation}}\newcommand{\ee}{\end{equation}}
\newcommand\beal{}
\def\beal#1\ee{\begin{equation}\begin{aligned}\relax #1\end{aligned}\end{equation}}
\newcommand\beall{}
\def\beall#1#2\ee{\begin{equation}\label{#1}\begin{aligned}\relax #2\end{aligned}\end{equation}}
\newcommand\bega{}
\def\bega#1\ee{\begin{equation}\begin{gathered}\relax#1\end{gathered}\end{equation}}
\newcommand{\Zinst}{Z_{\textnormal{inst}}}

\newcommand{\zbif}{z_{\textnormal{bif}}}
\newcommand{\zadj}{z_{\textnormal{adj}}}
\newcommand{\zvec}{z_{\textnormal{vec}}}
\newcommand{\zfun}{z_{\textnormal{fun}}}
\newcommand{\mfun}{m^{\textnormal{fun}}}
\newcommand{\mbif}{m^{\textnormal{bif}}}
\newcommand{\at}{\tilde{a}}
\newcommand{\bt}{\tilde{b}}
\newcommand{\Yt}{\widetilde{Y}}
\newcommand{\Nt}{\widetilde{N}}
\newcommand{\iti}{\widetilde{\imath}}
\newcommand{\jt}{\widetilde{\jmath}}

\DeclareMathOperator{\dist}{dist}
\newcommand{\ceil}[1]{\lceil#1\rceil}
\newcommand{\bigceil}[1]{\bigl\lceil#1\bigr\rceil}

\newcommand{\floor}[1]{\lfloor#1\rfloor}
\newcommand{\bigfloor}[1]{\bigl\lfloor#1\bigr\rfloor}

\newcommand{\mathtikz}[2][]
  {\ensuremath{\vcenter{\hbox{%
          \begin{tikzpicture}[#1]#2\end{tikzpicture}}}}}
\newcommand{\quiver}[2][]
  {\mathtikz[semithick,node distance=3em,#1]{#2}}
\tikzset{color-group/.style = {
    shape = circle,
    minimum size = 2.5ex,
    inner sep = .5ex,
    draw}}
\tikzset{flavor-group/.style = {
    shape = rectangle,
    minimum size = 2.5ex,
    inner sep = .5ex,
    draw}}

\begin{document}

\title{Convergence of Nekrasov instanton sum for unitary quivers}
\author{Bruno Le Floch}
\date{June 2026}

\maketitle

\begin{abstract}
The convergence radius of Nekrasov partition functions (as a function of instanton counting parameters) is shown to be positive for 4d $\mathcal{N}=2$ quiver gauge theories with unitary gauge groups in an open dense subset of parameters.  For $U(N)$ SQCD this is established if the ratio of equivariant parameters $b^2=\epsilon_1/\epsilon_2$ belongs to $\CC\setminus[0,+\infty)$ and Coulomb parameters or masses are away from a lattice of hyperplanes.  For general quivers it is only established for $b^2\in\CC\setminus\RR$.  When gauge multiplets are asymptotically free, the radius is infinite, whereas in the (mass-deformed) conformal case the radius admits a positive lower bound that only depends on $b^2$. The proof relies on the expression of the partition function as a sum over tuples of partitions, and a proof of absolute convergence based on combinatorial inequalities on products of (co)hook lengths. Through the AGT correspondence this implies that large classes of Virasoro and W-algebra conformal blocks on the sphere or torus have positive convergence radius, for generic dimensions and complex central charges.
\end{abstract}

\tableofcontents

\section{Introduction}

\subsection{Quivers and Nekrasov partition functions}

The realm of 4d $\Nsusy=2$ supersymmetric gauge theories has greatly improved out collective understanding of nonperturbative quantum field theory throughout the decades.  A major strand of research has been to derive exact results regarding the low-energy behaviour of such theories, starting from the seminal work of Seiberg and Witten~\cite{hep-th/9407087} which obtained the prepotential of SQCD from holomorphicity and asymptotic conditions.  This was later derived microscopically by Nekrasov~\cite{hep-th/0206161} who defined and evaluated the (now named) Nekrasov partition function of some 4d $\Nsusy=2$ gauge theories on the Omega background.  This consists in a \emph{formal power series} over an instanton counting parameter~$q$, whose coefficients are equivariant integrals over instanton moduli spaces.  These depend on the parameters $\eps_1,\eps_2$ of the Omega background, as well as mass and Coulomb branch parameters.

The AGT correspondence~\cite{0906.3219} relates the Nekrasov partition function to Virasoro and W-algebras conformal blocks.  In this translation, $q$~measures cross-ratios of operator positions, and the power series amounts to the OPE expansion, which is expected to converge, with a specific convergence radius.
The convergence of Nekrasov partition functions, or of conformal blocks, has been shown to converge in three settings (see also \cite{1608.02566,1709.05232} in 5d):
\begin{itemize}

\item for conformal blocks appearing in rational CFTs such as minimal models ($C_2$-cofinite VOA and $C_1$-cofinite modules), which have a parameter $b^2=\eps_1/\eps_2\in(-\infty,0)\cap\QQ$ and discrete masses~$m$ and Coulomb branch parameters~$a$, cf.~\cite{2204.04409} and references therein.

\item for the unrefined case $b^2=-1$, where the conformal blocks admit a free-field realization ($c=1$ in the Virasoro case), and convergence has been proven in many cases (4d $\Nsusy=2$ linear and circular quiver theories) in the study of (Fourier-transforms of) isomonodromic tau functions \cite{1403.1235,1608.00958,1705.01869,1712.08546,2011.06292} and directly~\cite{2212.06741};

\item 4d $\Nsusy=2^*$ $U(N)$ theory (vector multiplet and adjoint hypermultiplet) in the case $b^2\in(0,1)\cup(1,+\infty)$, where the probabilistic construction of Liouville CFT was used to prove a positive convergence radius for some range of parameters~$m,a$ \cite{2003.03802}, and in the case $b^2\in\CC\setminus[0,+\infty)$ for generic $m,a$, where the convergence radius was shown to be positive in \cite{2212.06741} and equal to~$1$ in~\cite{2602.19425} (with additional results for $b^2>0$);

\end{itemize}
The present work provides the first \emph{generic} convergence results beyond the ${\Nsusy=2^*}$ theory.  Specifically, we\footnote{By a mild abuse of notation, the author will use plural pronouns instead of~I\@.  This should not lead to a confusion in any of the main results since they do not feature anything singular.  Relatedly, $\sqrt{-1}$ will be used instead of~$i$ for the imaginary unit, to avoid confusion with Young diagram rows.} consider 4d $\Nsusy=2$ $U(N)$ SQCD with $N_f\leq 2N$ fundamental hypermultiplets, and more generally the class of 4d $\Nsusy=2$ quiver gauge theories described by a collection of vertices $V_0=\{1,\dots,K\}$ and edges~$V_1$ with
\begin{itemize}
\item a gauge group\footnote{As is usual in the study of Nekrasov partition functions, we consider the theories with $U(N_I)$ gauge group rather than $SU(N_I)$.
We will not discuss how the $U(1)$ contribution decouples to give $SU(N_I)$ instanton partition functions, as it is outside the scope of this work.} $U(N_I)$ for each vertex $I\in V_0$,
\item a bifundamental hypermultiplet of $U(N_I)\times U(N_J)$ of mass $\mbif_{IJ}$ for each edge $(I,J)\in V_1$,
\item $M_I\geq 0$ fundamental hypermultiplets of $U(N_I)$ of masses $\mfun_{I\alpha}$ for $1\leq I\leq K$ and $1\leq\alpha \leq M_I$,
\end{itemize}
under the \emph{balance condition}
\bel{balance-ineq}
b_I \coloneqq 2N_I - M_I - \sum_{(I,J)\in V_1} N_J - \sum_{(J,I)\in V_1} N_J \geq 0 , \qquad I\in V_0 .
\ee
To avoid clutter we do not include extra labels that would be needed to account for quivers with more than one edge between two given nodes.
When $V_1$ includes a loop $(I,I)$ the corresponding hypermultiplet is actually an adjoint of~$U(N_I)$, and all the formulas go through.
This class of quivers are well-known to admit an ADE and affine ADE classification, which we will not use the details of.

A given gauge group factor $U(N_I)$ is asymptotically free if the inequality~\eqref{balance-ineq} is strict, whereas the coupling does not run if the node is balanced, namely if the inequality~\eqref{balance-ineq} is saturated.
This class of theories can be depicted by quivers (cf.~\autoref{fig:quiver}) and admit brane realizations in IIA string theory whose uplift to M theory places these theories as important members of the class~S uncovered by Gaiotto~\cite{0904.2715}.

The Nekrasov partition function takes the form
\bel{Zinst-orig}
\Zinst(m,a; q) = \sum_{n_1,\dots,n_K\geq 0} q_1^{n_1} \dotsm q_K^{n_K} Z_n(m,a) ,
\ee
with an instanton-counting parameter $q_I$ for $1\leq I\leq K$, where each $Z_n(m,a)$ is an equivariant integral over (the product over $1\leq I\leq K$ of) the instanton moduli space of $n_I$ instantons of~$U(N_I)$.
Equivariant localization reduces each such integral to a finite-dimensional matrix model, which reduces further to a finite sum of residues, labeled by collections of Young diagrams,
\begin{subequations}
\begin{align}
\Zinst(m,a; q)
  & = \sum_{n_1,\dots,n_K\geq 0} q_1^{n_1} \dotsm q_K^{n_K} \sum_{|Y_1|=n_1,\dots,|Y_K|=n_K} Z_Y(m,a)
    \label{Zinst-Young-k}
  \\
  & = \sum_{Y_1,\dots,Y_K} q_1^{|Y_1|} \dotsm q_K^{|Y_K|} Z_Y(m,a) .
    \label{Zinst-Young-all}
\end{align}
\end{subequations}
Here, each $Y_I=(Y_{I1},\dots,Y_{IN_I})$ consists of $N_I$ Young diagrams and $|Y_I|=\sum_{1\leq\alpha\leq N_I}|Y_{I\alpha}|$ is the total number of boxes in~$Y_I$.
The coefficients $Z_Y(m,a)$ are rational functions\footnote{The explicit expressions are given in \autoref{subsec:Zinst} below.  We take these formulas as definitions of the instanton partition functions, and do not discuss their origin as equivariant integrals over instanton moduli spaces, evaluated rigorously in~\cite{math/0609841}.} of masses~$m$ and Coulomb parameters~$a$ that involve products over boxes of the Young diagrams~$Y_{I\alpha}$.

\begin{figure}\centering
\begin{scriptsize}
  \quiver[color-group/.append style={minimum size=8ex, node distance=10ex}]{%
    \node (N1) [color-group] {$N_1$};
    \node (Ndots)[right of=N1] {$\cdots$};
    \node (Nn-1)[color-group, right of=Ndots] {$N_{K-1}$};
    \node (Nn) [color-group, right of=Nn-1] {$N_K$};
    \node (M1) [flavor-group, below of=N1] {$M_1$};
    \node (Mdots)[below of=Ndots] {$\cdots$};
    \node (Mn-1) [flavor-group, below of=Nn-1] {$M_{K-1}$};
    \node (Mn) [flavor-group, below of=Nn] {$M_K$};
    \draw (Nn)   to (Nn-1);
    \draw (Nn-1) to (Ndots);
    \draw (Ndots) to (N1);
    \draw (Nn) -- (Mn);
    \draw (Nn-1) -- (Mn-1);
    \draw (N1) -- (M1);
  }
  \qquad
  \quiver{%
    \node (N1) at (-180:1) [color-group] {$N$};
    \node (N2) at (-108:1) [color-group] {$N$};
    \node (Ndots) at (-36:1) {$\cdots$};
    \node (Nn-1) at (36:1) [color-group] {$N$};
    \node (Nn) at (108:1) [color-group] {$N$};
    \draw (Nn)   to (Nn-1);
    \draw (Nn-1) to (Ndots);
    \draw (Ndots) to (N2);
    \draw (N2) to (N1);
    \draw (N1) to (Nn);
  }
  \qquad
  \quiver[scale=.8]{%
    \node (N1) at (0,0) [color-group] {$2$};
    \node (N2) at (1,0) [color-group] {$4$};
    \node (N3) at (2,0) [color-group] {$6$};
    \node (N4) at (3,0) [color-group] {$4$};
    \node (N5) at (4,0) [color-group] {$2$};
    \node (N6) at (2,1) [color-group] {$3$};
    \node (M) at (2,-.7) [flavor-group] {$1$};
    \draw (N1)--(N2);
    \draw (N2)--(N3);
    \draw (N3)--(N4);
    \draw (N4)--(N5);
    \draw (N3)--(N6);
    \draw (N3)--(M);
  }
\end{scriptsize}
\caption{\label{fig:quiver}Linear, circular, and (an example) E-type quiver gauge theories with unitary gauge groups.  In the circular case the balance condition forces ranks to be equal and prevents the addition of fundamental hypermultiplets.}
\end{figure}



Following the approach in~\cite{2212.06741}, our technique consists of bounding each term individually, then studying the sum of these (positive) bounds.
In other words, we replace the question of convergence of the sum over~$n$ in~\eqref{Zinst-Young-k} by the more constraining question of absolute convergence as a sum over all collections~$Y$, namely:
\bel{absolute-sum}
\text{When is } \sum_Y |q_1|^{|Y_1|}\dots|q_K|^{|Y_K|} |Z_Y(m,a)|<+\infty \text{ ?}
\ee
We dub the interior\footnote{This is similar to how one considers \emph{open} convergence disks for one-variable analytic functions even when their Taylor series converges absolutely on the disk boundary, such as what happens for polylogarithms.} of this region of~$q$ the \emph{absolute convergence domain,} and call this sum the \emph{absolute Nekrasov instanton sum.}
The absolute convergence for some~$q$ implies the convergence of~$\Zinst$, but the converse does not hold since phases can lead to cancellations for each fixed instanton number.  In fact, CFT considerations suggest that the convergence of~$\Zinst$ should only depend on the cross-ratios~$q$ and not on other parameters, whereas the absolute convergence results that we find depend on~$b^2$ and degrade as $b^2$ approaches the real line.

An important advantage of absolute convergence (when it holds) is that it allows any reorganization of terms in~\eqref{Zinst-Young-all}, which justifies some resummations that have been freely performed in other works on Nekrasov partition functions.

\subsection{Statement of the convergence result}

Throughout our discussion we assume non-zero deformation parameters and we phrase our conclusions in terms of the Liouville/Toda coupling constant~$b$ defined (up to a sign) by
\be
\eps_1,\eps_2\neq 0 , \qquad
b^2 = \eps_1 / \eps_2 .
\ee
To state the genericity condition on masses and Coulomb branch parameters, it is useful to introduce the lattices
\beal
\Lambda(\eps_1,\eps_2) & = \ZZ \eps_1 + \ZZ \eps_2 = \bigl\{k\eps_1 + l\eps_2 \bigm| k,l\in\ZZ \bigr\} \subset \CC , \\
\Lambda_{\geq}(\eps_1,\eps_2) & = \ZZ_{\geq 0} \eps_1 + \ZZ_{\geq 0} \eps_2 = \bigl\{k\eps_1 + l\eps_2 \bigm| k,l\in\ZZ, k,l\geq 0 \bigr\} \subset \CC .
\ee
For real $b^2$, the lattice $\Lambda(\eps_1,\eps_2)$ degenerates to a discrete (if $b^2\in\QQ$) or dense subset of a line, and in particular has (infinitely many) combinations $m\eps_1+n\eps_2$ arbitrarily close to zero, whereas $\Lambda_{\geq}(\eps_1,\eps_2)$ only does so for $b^2<0$.
It is also useful to introduce the closures $\overline{\Lambda(\eps_1,\eps_2)}$ and $\overline{\Lambda_{\geq}(\eps_1,\eps_2)}$, which coincide with the lattices except for real irrational~$b^2$ and negative real irrational~$b^2$, respectively.

We also need the notation $\floor{x}$ (resp.\ $\ceil{x}$) for the largest (resp.\ smallest) integer $\leq x$ (resp.\ $\geq x$), and
\be
\{x\} = x - \floor{x} \in [0,1)
\ee
for the fractional part.
Coulomb branch parameters are denoted by $a_{I\alpha}$ for $1\leq I\leq K$ and $1\leq\alpha\leq K$, and we omit the ranges of $I,\alpha,\beta$ when they are clear from context.

\begin{theorem}[On quivers]\label{thm:quivers}
  Assume that $b^2\in\CC\setminus\RR$ and that none of the terms $a_{I\alpha}-a_{I\beta}$ for $\alpha\neq\beta$, nor the terms $a_{I\alpha}-a_{J\beta}-\mbif_{IJ}$ for $(I,J)\in V_1$, nor the terms $a_{I\alpha}-\mfun_{I\beta}$ belong to the lattice $\Lambda(\eps_1,\eps_2)$.
  Then there exists $R>0$ (depending only on~$b^2$) such that the absolute sum converges as long as $|q_I|<R$ for each node~$I$ that is balanced (cf.~\eqref{balance-ineq}).  Instanton counting parameters~$q_I$ of unbalanced nodes can be arbitrary (namely the convergence radius is infinite).
\end{theorem}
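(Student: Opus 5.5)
Our plan is to bound each term $|Z_Y(m,a)|$ individually by a product of elementary factors and then show that the resulting positive series converges. We write $Z_Y$ as a product over nodes and edges of the standard Nekrasov factors: each vector multiplet contributes $\prod_{\alpha,\beta}\prod_{s\in Y_{I\alpha}} E_{\alpha\beta}(s)^{-1}(\eps_1+\eps_2-E_{\alpha\beta}(s))^{-1}$, with $E_{\alpha\beta}(s)=a_{I\alpha}-a_{I\beta}-\eps_1 L_{Y_{I\beta}}(s)+\eps_2(A_{Y_{I\alpha}}(s)+1)$. Bifundamental and fundamental hypermultiplets contribute analogous products with shifted arguments, involving $\mbif_{IJ}$ and $\mfun_{I\beta}$. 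The strategy is to pair every numerator factor of a hypermultiplet with a denominator factor of a vector multiplet of comparable size. Each ratio is then bounded by $C^{\#\text{boxes}}$. The surplus denominator factors, of which there are $b_I$ per box of node $I$ roughly speaking, produce a decay like $\prod_s |\text{hook}(s)|^{-b_I}$.

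The first step is a lower bound on the linear forms. Since $b^2\notin\RR$, we have $|k\eps_1+l\eps_2|\geq c(b^2)\,|\eps_2|\,(|k|+|l|)$ for all integers $k,l$. For $x\notin\Lambda(\eps_1,\eps_2)$ we likewise have $|x+k\eps_1+l\eps_2|\geq c_x\max(1,|k|+|l|)$, where $c_x>0$ because the lattice is discrete and $x$ is not in it. The genericity assumptions in \autoref{thm:quivers} are exactly what guarantees that every factor appearing in $Z_Y$ is of this form with a nonzero offset. Each factor is therefore comparable, up to constants, to $1+|\text{arm}|+|\text{leg}|$, or to a (co)hook length built from two diagrams. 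Upper bounds go the same way: $|x+k\eps_1+l\eps_2|\leq C(1+|k|+|l|)$.

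The second step is combinatorial, and it is where the main obstacle lies. We need an inequality comparing products over boxes of "relative hook lengths" $h_{Y,W}(s)=1+|A_Y(s)|+|L_W(s)|$ between different diagrams. In the conformal case the numerator products, from bifundamentals and fundamentals, must be controlled by the vector denominators up to $C^{|Y|}$ with $C$ depending only on $b^2$. The key lemma we would aim for is of the form
\[
\prod_{s\in Y}h_{Y,W}(s)\prod_{s\in W}h_{W,Y}(s)\;\leq\; C^{|Y|+|W|}\prod_{s\in Y}h_{Y,Y}(s)\prod_{s\in W}h_{W,W}(s),
\]
or rather its geometric-mean (Cauchy–Schwarz–type) version, which distributes each bifundamental between its two nodes. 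Fundamental factors, which grow like products of contents $|a-\mfun+(i-1)\eps_1+(j-1)\eps_2|$, would be bounded by products of hooks times $C^{|Y|}$. To prove the key lemma we would first try the classical bijection, for a pair of partitions, between boxes of $Y$ and $W$ with arm/leg data. We would then bound by $C^{n}$ the ratio produced by a single row insertion, by induction on the number of boxes. The offsets $a_{I\alpha}-a_{J\beta}$ enter only through the constants, since the estimates of the first step hold uniformly.

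Finally we assemble the pieces. Summing over $Y$ with $|Y_I|=n_I$ gives at most $p(n)^{N}\leq e^{C\sqrt n}$ terms, which is subexponential. At balanced nodes the bound on $|Z_Y|$ is $C^{\sum|Y_I|}$, so the series converges for $|q_I|<R=1/C$, with $C$ depending only on $b^2$; the $a,m$-dependent constants affect only finitely many small boxes and can be absorbed by enlarging $C$ or into an overall prefactor. At an unbalanced node, $b_I\geq1$ extra denominator factors remain. Since $\prod_{s}h(s)\geq |Y|!/\!\dim$ and is roughly $\geq (n/e)^{n/2}$ up to $C^n$, these factors give superexponential decay, and the radius in $q_I$ is infinite. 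The delicate point remains the uniform combinatorial inequality of the second step.
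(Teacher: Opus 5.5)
Your architecture is the paper's: compare every factor attached to a box of $Y_{I\alpha}$ with the diagonal hook of $Y_{I\alpha}$, leaving $\prod_s|E(0,Y_{I\alpha},Y_{I\alpha};s)|^{-b_I}$ at node $I$, then sum a geometric series at balanced nodes and use superexponential decay elsewhere. The gap is that the only combinatorial inequality you formulate is an \emph{upper} bound on relative hooks, and that is not enough. The vector multiplet of $U(N_I)$ gives $2N_I$ denominator factors per box of $Y_{I\alpha}$, and only two of them are diagonal. The other $2(N_I-1)$ are relative hooks $h_{Y_{I\alpha},Y_{I\beta}}$ with $\beta\neq\alpha$. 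Whenever $N_I\geq 2$, your pairing therefore has to charge hypermultiplet numerators, or the surplus decay $\prod_s|\mathrm{hook}(s)|^{-b_I}$, to off-diagonal factors. For that you need the reverse inequality
\[
\prod_{s\in\lambda}h_{\lambda,\mu}(s)\geq C^{-(|\lambda|+|\mu|)}\prod_{s\in\lambda}h_{\lambda}(s),
\]
which is Lemma~\ref{lem:bif-2}. It does not follow from your upper bound, and without it the off-diagonal vector factors are uncontrolled. No Cauchy--Schwarz splitting of bifundamentals is needed: the paper bounds the product over boxes of $\lambda$ alone, in both directions, with constant $C^{|\lambda|+|\mu|}$.

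Beyond this, the combinatorial core is only announced. A bijection with arm/leg data plus induction on row insertions is not an argument, and the contents-versus-hooks bound for fundamentals is simply asserted. The paper gets both from the Morales--Pak--Panova estimate $1\leq e(\nu/\mu)/F(\nu/\mu)\leq 2^{|\nu/\mu|}$, combined with invariance of $e$ under $180$-degree rotation. This makes hook and cohook products of a (skew) shape agree up to an exponential factor; for a straight shape it reads $\prod(i+j-1)\leq 2^{|\lambda|}\prod h_\lambda$. For pairs, the paper splits $\lambda$ into two regions:
\begin{itemize}
\item On $\lambda\cap\mu$, the ratio equals $\prod\bigl(1+(\mu'_i-\lambda'_i)/h_\lambda(i,j)\bigr)$ and is bounded row by row by a binomial coefficient.
\item On the skew shape $\lambda\setminus(\lambda\cap\mu)$, for the lower bound it pairs $(i,j)$ with $(\lambda_j+\mu_j+1-i,j)$. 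It then uses $(a+b)(\tilde a+\tilde b)\leq(a+\tilde b)(\tilde a+b)$ to reduce relative hooks to cohooks of the skew shape.
\end{itemize}

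Finally, your handling of $a,m$ is wrong as stated. The offset-dependent constants do not affect only finitely many boxes: a staircase of size $n$ already has about $\sqrt{2n}$ boxes of hook length $1$. Enlarging $C$ instead would make $R$ depend on $a,m$, contradicting the statement. What is needed is the density bound of Lemma~\ref{lem:bound-density}. Boxes whose factor lies in a fixed strip occupy distinct rows and columns, so only $O(\sqrt{|\lambda|})$ of them lie in any bounded region. This makes the offset-dependent losses $e^{O(\sqrt n\log n)}$, as in Proposition~\ref{prop:irrelevant-masses}.
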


Observe that the result specializes easily to SQCD with $N_f\leq 2N$, for which there is an infinite convergence radius in the asymptotically free case ($<2N$ flavours) for all $b^2\in\CC\setminus[0,+\infty)$, and in the mass-deformed conformal case ($2N$~flavours) a convergence radius $R>0$ that depends on $b^2\in\CC\setminus\RR$.
In this theory we actually achieve a positive radius also in the case $b^2<0$, as summarized next (with indices $I$~dropped).

\begin{theorem}[On SQCD]\label{thm:SQCD}
  Consider $N_f\leq 2N$ SQCD\@. Assume that $b^2\in\CC\setminus [0,+\infty)$ and that none of the terms $a_\alpha-a_\beta$ for $\alpha\neq\beta$ nor the terms $a_\alpha-m_\beta$ belong to the closure of the lattice $\overline{\Lambda(\eps_1,\eps_2)}$.
  Then the absolute Nekrasov instanton sum has a positive convergence radius $R>0$ that depends only on~$b^2$ and is infinite for $N_f<2N$.
\end{theorem}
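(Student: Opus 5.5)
We will write $Z_Y(m,a)$ for SQCD explicitly, with Nekrasov's formula: for each pair $(\alpha,\beta)$ the vector multiplet contributes
\[
\prod_{s\in Y_\alpha}\frac{1}{E_{\alpha\beta}(s)\,\bigl(\eps_1+\eps_2-E_{\beta\alpha}(s)\bigr)},
\]
where $E_{\alpha\beta}(s)=a_\alpha-a_\beta-\eps_1 L_{Y_\beta}(s)+\eps_2(A_{Y_\alpha}(s)+1)$. Each of the $N_f$ fundamentals contributes $\prod_{s\in Y_\alpha}(a_\alpha-m_f+\phi(s))$, where $\phi(s)=(i-1)\eps_1+(j-1)\eps_2$ is the content of the box. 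Since the $\eps$ scale out, we may normalize $\eps_2=1$, $\eps_1=b^2$. The strategy is then to bound $|Z_Y|\le C_0\,C^{|Y|}\,F(Y)$, with $F$ a purely combinatorial quantity, and to show that $\sum_Y |q|^{|Y|}F(Y)$ converges for small $|q|$, or for all $q$ when $N_f<2N$.

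\emph{Step 1: diagonal factors.} For $\alpha=\beta$ the factors are $-\eps_1 L+\eps_2(A+1)$ and its mirror. When $b^2\notin[0,\infty)$ the two terms $-b^2L$ and $(A+1)$ do not cancel, so $|{-b^2L+A+1}|\ge c(b^2)\,(L+A+1)=c\,h(s)$. This gives the estimate $|Z^{\rm vec}_{\alpha\alpha}|\le c^{-2|Y_\alpha|}\prod_s h(s)^{-2}$, the hook product.

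\emph{Step 2: off-diagonal factors, the main obstacle.} Here $a_\alpha-a_\beta\notin\overline\Lambda$, so every factor is bounded away from zero, $|x+k\eps_1+l\eps_2|\ge\delta$. We still need a lower bound growing like $|k|+|l|$. For $b^2\notin\RR$ this is immediate: the lattice is non-degenerate, so $|x+k\eps_1+l\eps_2|\ge c'(|k|+|l|+1)$ once $|k|+|l|$ exceeds a constant depending on $x$. For $b^2<0$ the decisive point is the sign pattern. The off-diagonal arguments have the form $x-b^2L+(A+1)$, with $L,A$ of mixed sign, whereas $b^2<0$ only allows near-cancellation when $-b^2L$ and $A+1$ have opposite signs. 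I would therefore try to prove a cancellation-free bound: combining the two factors for $(\alpha,\beta)$ and $(\beta,\alpha)$, and using $\delta$ to control the finitely many small-$|k|+|l|$ cases through the closure hypothesis, one should get $|E|\ge c\,(|L|+|A+1|)$ up to an additive constant. This should yield
\[
\bigl|Z^{\rm vec}_{\alpha\beta}\bigr|^{-1}\ge C^{-|Y|}\prod_{s\in Y_\alpha}\bigl(|L_{Y_\beta}(s)|+|A_{Y_\alpha}(s)|+1\bigr)^2 .
\]
The fundamentals satisfy the trivial upper bound $|a-m+\phi(s)|\le C(i+j)$, i.e.\ they are controlled by cohook (content-type) lengths.

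\emph{Step 3: combinatorial inequality.} We now need to show that
\[
\prod_{\alpha,\beta}\ \prod_{s\in Y_\alpha}\bigl(|L_{Y_\beta}(s)|+|A_{Y_\alpha}(s)|+1\bigr)^{-1}\ \prod_{\alpha}\prod_{s\in Y_\alpha}(i+j)^{N_f}
\]
is at most $C^{|Y|}$ times $\prod_\alpha \bigl(|Y_\alpha|!\bigr)^{-(2N-N_f)/\dots}$, or more crudely, that it grows at most like $\exp(O(|Y|))$ times a power of $\prod h^{-1}$. The first task is to compare each mixed hook product with the pure hook products $\prod_{s\in Y_\alpha}h(s)$ and $\prod_{s\in Y_\beta}h(s)$. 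I would aim for
\[
\prod_{s\in Y_\alpha}\bigl(|L_{Y_\beta}|+|A_{Y_\alpha}|+1\bigr)\ \ge\ C^{-|Y|}\Bigl(\prod_{Y_\alpha}h\ \prod_{Y_\beta}h\Bigr)^{1/2},
\]
proved via row-by-row (Frobenius-style) counting. The content product $\prod(i+j)$ is bounded by $C^{n}\prod h$, using the hook-content formula style estimate $\prod_s(i+j-1)\le C^{n}\,n!\le C^{n}\prod_s h(s)$ along with $\prod h\ge n!/\,(\#\text{SYT})$ and $\#\text{SYT}\le\sqrt{n!}$. Then $|Z_Y|\le C^{|Y|}\prod_\alpha(\prod_{Y_\alpha}h)^{-(2N-N_f)/N}$, which is roughly $C^{|Y|}(n!)^{-(2N-N_f)/(2N)}$. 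Since the number of $N$-tuples of partitions of total size $n$ grows subexponentially, the sum converges with radius at least $1/C$ when $N_f=2N$ and converges for all $q$ when $N_f<2N$. The constant $C$ depends on $b^2$ and, through $\delta$, on the parameters; only the bounded-index corrections depend on $a,m$, and these should be absorbable into $C_0$. Consequently the radius depends only on $b^2$. I expect Step 2 for $b^2<0$, together with the mixed-hook comparison in Step 3, to be where the real work lies.
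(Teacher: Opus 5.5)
Your overall architecture is the paper's: bound $|Z_Y|$ by $C^{|Y|}$, with $C$ depending only on $b^2$, times the diagonal hook products of each $Y_\alpha$ to the power $-(2N-N_f)$, then sum over tuples of partitions. For $b^2\notin\RR$ your Step~2 is essentially fine, modulo the last point below. However, the estimates that carry the weight are either false as stated or missing.

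\textbf{Step 2 for $b^2<0$.} No box-by-box bound $|E_{\alpha\beta}(s)|\ge c\,(|L|+|A+1|)-C'$ holds. Take $Y_\beta=\emptyset$ and $Y_\alpha$ a $k\times k$ square, so that $L_{Y_\beta}(s)<0$ on every box. After the relabelling $i\mapsto\lambda_j+1-i$ and with $\eps_2=1$, the factors become $x+i-|b^2|j$, while $|L|+|A+1|$ becomes $i+j$. Along the line $i\approx|b^2|j$ there are $\sim\sqrt n$ boxes with $|E|\le|x|+1$ but $i+j$ up to order $\sqrt n$. Moreover, for any $M$ a fraction $\asymp 1/M$ of all boxes has $(i+j)/|E|>M$, so these cannot be discarded as exceptions either. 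Pairing $(\alpha,\beta)$ with $(\beta,\alpha)$ does not help: on boxes of $Y_\alpha$ the two factors are $E_{\alpha\beta}(s)$ and $\eps_1+\eps_2-E_{\alpha\beta}(s)$, which are small together, and the closure hypothesis only gives $|E|\ge\delta$.

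Only a product-level comparison is true, and it needs a real argument. The paper's Lemma~\ref{lem:SQCD-off-diag} works row by row in $\lambda=Y_\alpha$, comparing each off-diagonal factor with the diagonal factor of the same box of the \emph{same} diagram. Boxes where the off-diagonal factor is within $O(1)$ of zero are $O(\sqrt n)$ in number by Lemma~\ref{lem:bound-density}. On the other boxes of a row the relevant real parts are spaced by at least~$1$, so the row's product of ratios is bounded by a product of two binomial coefficients, at most $2^{3\lambda'_i}$. This yields Proposition~\ref{prop:SQCD-off}, with no reference to the hooks of $Y_\beta$.

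\textbf{Step 3: the content bound.} This is wrong. By the hook-length formula $\prod_s h(s)=n!/f^\lambda\le n!$, so ``$n!\le C^n\prod_s h(s)$'' goes the wrong way. For a $k\times k$ square, $\prod_s h(s)=\prod_s(i+j-1)=e^{\frac12 n\log n+O(n)}$, while $n!=e^{n\log n+O(n)}$. Your chain therefore loses up to $(n!)^{1/2}$ per flavour, and the resulting bound no longer converges for $N_f=2N$ (nor for any $N_f>N$). What is needed is $\prod_s(i+j-1)\le 2^{|\lambda|}\prod_s h(s)$, which is genuinely nontrivial. The paper obtains it (Lemma~\ref{lem:cohook-hook}) from Morales--Pak--Panova's bounds on skew standard Young tableaux via Naruse's formula. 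For $b^2<0$ it instead uses an explicit near-bijection between content factors and diagonal hook factors (Lemma~\ref{SQCD-bound-fun}).

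\textbf{Step 3: the mixed-hook inequality.} This is false as stated: for $Y_\alpha=\emptyset$ its left side is $1$, while its right side is at least $C^{-n}(n!)^{1/4}$. The correct (and stronger) statement compares $\prod_{s\in Y_\alpha}(|L_{Y_\beta}(s)|+A_{Y_\alpha}(s)+1)$ with $\prod_{Y_\alpha}h$ alone (Lemma~\ref{lem:bif-2}). On $Y_\alpha\cap Y_\beta$ row-by-row binomial bounds suffice. On $Y_\alpha\setminus Y_\beta$, however, it is a hook-versus-cohook comparison on a skew shape, again taken from Morales--Pak--Panova rather than from row counting.

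\textbf{Parameter dependence of the radius.} The parameter-dependent losses cannot be absorbed into $C_0$. Boxes with bounded $|L|+|A+1|$ can number $\sim\sqrt n$, e.g.\ the $\sim\sqrt{2n}$ corners of a staircase when $Y_\beta=Y_\alpha$. You need the density bound of Lemma~\ref{lem:bound-density} to show that they cost only $e^{O(\sqrt n\log n)}$. This is what makes $R$ depend only on $b^2$ (Proposition~\ref{prop:irrelevant-masses}).
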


\begin{remark}
  We emphasize once more that the divergence of~\eqref{absolute-sum} for some ranges of parameters does not give any information on whether or not the first expression in~\eqref{Zinst-Young-k} is well-defined and convergent for all $|q|<1$.  In fact, if expectations about Virasoro/$W_N$ conformal bootstrap are correct, then the convergence domains should be universal, with no $b^2$~dependence.  This requires the sum over~$Y$ with fixed $|Y_1|,\dots,|Y_K|$ to feature cancellations.
\end{remark}

\begin{remark}
  The convergence radius is worse here than for the 4d $\Nsusy=2^*$ theory treated in~\cite{2602.19425}, where $R=1$ for all $b^2\in\CC\setminus[0,+\infty)$.
  In the present paper on SQCD and quivers the optimal radius is not obtained, partly because some technical inequalities are not optimal, and partly because \emph{absolute} convergence likely fails, driven by partitions with an interesting limit shape \`a la Nekrasov--Okounkov.
\end{remark}

\subsection{Proof strategy and outline}

We now outline the main ideas of the proof, with references to the intermediate results in the main file.  In contrast to~\cite{2602.19425}, we make no attempt to place rigorous upper bounds on the convergence radius, focusing instead on proving convergence in domains where this can be done.  In \autoref{sec:no-mass} we find the following simplifications.
\begin{itemize}
\item All factors are uniformly bounded away from zero and above by the total number~$n$ of instantons, so we can freely throw away $o(n/\log n)$ of them without affecting the (open) absolute convergence domain (Proposition~\ref{prop:removing-factors}).  This allows being somewhat coarse when separating zones in Young diagrams, which helps extend previous work~\cite{2212.06741} beyond the unrefined limit.

\item To show that the convergence radius only depends on $b^2$ and not on mass and Coulomb branch parameters, the key is to correctly take into account, in all factors, lower bounds on the spread of instantons.
  We find that any given bounded subset of the complex plane can accommodate at most $O(\sqrt{n})$ instantons, so that typical factors in $Z_Y(m,a)$ are of order at least $\sqrt{n}$.  Constant shifts by mass and Coulomb branch parameters thus have a negligible effect on most factors and do not affect the absolute convergence domain (cf.\ Proposition~\ref{prop:irrelevant-masses}).
\end{itemize}

We then consider SQCD with $b^2\in(-\infty,0)$ in \autoref{sec:SQCD-neg}, by extending the proof of the unrefined case $b^2=-1$ in~\cite{2212.06741}.  There are two main steps.
\begin{itemize}
\item In the vector multiplet contribution, off-diagonal factors are bounded by diagonal ones times a power $g(b^2)^n$ of some function~$g$, by following exactly the proof of Lemma 3.9 in~\cite{2212.06741} which pairs up boxes that have close enough contributions in the numerator and denominator.  This converts the problem to the $U(1)$ case.

\item In the unrefined case, the remaining ratio of contents by hook product was bounded through representation-theoretic exact expressions.  Here, we take a more pedestrian tack by pairing numerator and denominator factors, with each ratio being less than one.
\end{itemize}

We move on in \autoref{sec:quivers} to non-real $b^2\in\CC\setminus\RR$ for a general quiver (including SQCD\@).  Here there are two types of ratios to bound.
\begin{itemize}
\item A ratio of diagonal and off-diagonal vector multiplet contributions, which involves pairs of Young diagrams $\lambda,\mu$, and also deals with the bifundamental contribution.  By separating products into the intersection $\lambda\cap\mu$ and the symmetric difference $\lambda\Delta\mu$, we reduce the problem to a comparison of products of hook and cohook lengths proven by Morales, Pak, and Panova~\cite{1610.07561}.

\item The problem reduces to the $U(1)$ case, and the ratio of fundamental matter and vector contributions are easily bounded by the same hook/cohook bound.
\end{itemize}

\section{Mass and Coulomb parameters do not matter}
\label{sec:no-mass}

\subsection{Instanton partition function}
\label{subsec:Zinst}

We follow the same conventions on Young diagrams and instanton partition functions as \cite{hep-th/0211108,2212.06741}, but do not use the arm/leg length notation since it is useful in our setting to see more explicitly the different contributions.  Boxes of a Young diagram~$\lambda$ are labeled by $(i,j)$ with $1\leq i\leq\lambda_j$ or equivalently $1\leq j\leq\lambda'_i$, with $\lambda'$ the transposed diagram.
We define
\be
E(a,\lambda,\mu;i,j) = a + \eps_2 - \eps_1 (\mu'_i - j) + \eps_2 (\lambda_j - i)
\ee
from which we express the contribution of a bifundamental of $U(N)\times U(\Nt)$ of mass~$m$, of an adjoint, and of a vector multiplet as
\begin{align}
\zbif(a,Y,\at,\Yt;m)
& =
\begin{aligned}[t]
& \prod_{\alpha=1}^{N} \prod_{\beta=1}^{\Nt} \biggl( \prod_{(i,j)\in Y_\alpha} E(a_\alpha - \at_\beta - m, Y_\alpha, \Yt_\beta; i,j) \\
& \prod_{(i,j)\in \Yt_\beta} \bigl(- E(\at_\beta - a_\alpha + m-\eps_1 - \eps_2, \Yt_\beta, Y_\alpha; i,j)\bigr) \biggr) ,
\end{aligned}
\\
\zadj(a,Y;m) & = \zbif(a,Y,a,Y;m) ,
\\
\zvec(a,Y) & = (\zadj(a,Y;0))^{-1} .
\end{align}
A collection of $M\geq 0$ fundamental hypermultiplets of $U(N)$ contributes the same as a bifundamental with one of the collection of Young diagrams vanishing, namely $\Yt=(\emptyset,\dots,\emptyset)$.  It is usually written differently by exchanging boxes $(i,j)$ and $(1+\lambda_j-i,j)$:
\beall{zfun}
\zfun(a,Y;\mfun)
& = \zbif(a,Y,\mfun,\emptyset;0)
\\
& = \prod_{\alpha=1}^{N} \prod_{\beta=1}^{M} \prod_{(i,j)\in Y_\alpha} E(a_\alpha - \mfun_\beta , Y_\alpha, \emptyset; i,j)
\\
& = \prod_{\alpha=1}^{N} \prod_{\beta=1}^{M} \prod_{(i,j)\in Y_\alpha} \bigl(
a_\alpha - \mfun_\beta + \eps_1 j + \eps_2 i \bigr) .
\ee

The instanton partition function of the quivers of interest to us are
\beall{Zinst-detail}
\Zinst & = \sum_{Y_1,\dots,Y_K} q_1^{|Y_1|} \dots q_K^{|Y_K|} Z_Y(m,a) ,
\\
Z_Y(m,a) & = \prod_{I=1}^K \bigl(\zvec(a_I,Y_I) \zfun(a,Y;\mfun_I)\bigr) \prod_{(I,J)\in V_1} \zbif(a_I,Y_I, a_J,Y_J;\mbif_{IJ})
\ee

\subsection{Small factors can be ignored}

We find here that absolute convergence is not affected much\footnote{Only the convergence on the boundary of the absolute convergence domain is affected, but since we do not achieve the optimal convergence radius, this is moot.} by removing a small proportion of the factors (cf.\ Lemma~\ref{lem:bound-E} and Proposition~\ref{prop:removing-factors}), and there are sufficiently few factors smaller than any fixed bound (cf.\ Lemma~\ref{lem:bound-density}).  Later sections will thus be able to focus on large scale features of Young diagrams.

\paragraph{Ignoring sufficiently few factors}

\begin{lemma}\label{lem:bound-E}
  Under our genericity assumptions, all factors $E(a,\lambda,\mu;i,j)$ appearing in the contribution $Z_Y(m,a)$ of the instanton partition function~\eqref{Zinst-detail} belong to an annulus
  \be
  1/C \leq |E(a,\lambda,\mu;i,j)| \leq C |Y|
  \ee
  for some $C>1$ that only depends on masses, Coulomb branch parameters, and equivariant parameters~$\eps_1,\eps_2$, where $|Y|$~is the total instanton number.
\end{lemma}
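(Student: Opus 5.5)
The plan is to write every factor in the uniform shape
\be
E(a,\lambda,\mu;i,j) = c + k\eps_1 + l\eps_2 , \qquad k = j-\mu'_i ,\quad l = \lambda_j - i + 1 ,
\ee
where $c$ is one of finitely many constants built from the parameters. The integers $k,l$ are then bounded in terms of~$|Y|$, and the annulus follows by treating the upper and lower bounds separately. Concretely, for the vector and adjoint/bifundamental factors $c$ is $a_{I\alpha}-a_{J\beta}-m$ for $m=\mbif_{IJ}$ or $m=0$, or the corresponding term $\at_\beta-a_\alpha+m-\eps_1-\eps_2$, up to the extra shift $\eps_2$ absorbed into~$l$. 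For fundamentals we use the last line of~\eqref{zfun}, which gives $c=a_{I\alpha}-\mfun_{I\beta}$ with $k=j$ and $l=i$. The set $\mathcal{C}$ of constants that occur is finite and depends only on the parameters.

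\textbf{Upper bound.} Since $(i,j)\in\lambda$, we have $1\le i\le\lambda_j\le|\lambda|$ and $1\le j\le\lambda'_1\le|\lambda|$, so $0\le l\le|\lambda|$. Likewise $0\le\mu'_i\le|\mu|$, so $|k|\le|\lambda|+|\mu|$. All of these are at most $|Y|$, hence
\be
|E|\le\max_{c\in\mathcal{C}}|c|+(|\eps_1|+|\eps_2|)\,|Y|\le C|Y|
\ee
whenever $|Y|\ge1$. If $|Y|=0$ there are no factors at all.

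\textbf{Lower bound.} This is the main point, and the argument splits into two cases.
\begin{itemize}
\item If $c$ is a difference $a_{I\alpha}-a_{I\beta}$ with $\alpha\ne\beta$, a bifundamental combination, or a fundamental combination (up to a lattice shift by $\eps_1+\eps_2$, which is harmless), the genericity assumption says $c\notin\Lambda(\eps_1,\eps_2)$, so $E\neq0$. We need a \emph{uniform} lower bound, not just nonvanishing. For $b^2\notin\RR$ the lattice $\Lambda$ is discrete, so $\operatorname{dist}(c,\Lambda)>0$; in the setting of Theorem~\ref{thm:SQCD} we use instead the hypothesis $c\notin\overline{\Lambda}$. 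Either way $|E|\ge\operatorname{dist}(c,\overline{\Lambda})>0$ for every $k,l$, and taking the minimum over the finite set $\mathcal{C}$ gives $1/C$.
\item If $c$ is diagonal, namely the vector multiplet with $\alpha=\beta$ (or an adjoint with $m=0$), then $c\in\Lambda$ and genericity gives nothing. Here I would use the standard hook/cohook structure. For $\zvec$ with $\lambda=\mu$, the factors are $\pm(\eps_1(a'+1)... )$, more precisely $-\eps_1 L+\eps_2(A+1)$ and $\eps_1(L+1)-\eps_2 A$, with arm $A$ and leg $L\ge0$. These are of the form $k\eps_1+l\eps_2$ with $k,l$ of opposite signs and $(k,l)\ne(0,0)$. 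For $b^2\notin(0,+\infty)$, that is $\eps_1/\eps_2\notin\RR_{>0}$, such combinations satisfy
\be
|k\eps_1-l\eps_2|\ge\delta\,(|k|+|l|)\ge\delta ,
\ee
with $\delta>0$ depending on $b^2$, because $\eps_1$ and $-\eps_2$ do not point in opposite directions.
\end{itemize}

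The step to check carefully is that every diagonal factor really has its two integer coefficients of opposite (weak) sign and not both zero. This is precisely the nondegeneracy of hook lengths, and it is the only place where $b^2\notin[0,+\infty)$ is used. The constant $C$ is then the maximum of the constants obtained in the two cases. One should also confirm that bifundamental loops $(I,I)$ with $\mbif\in\Lambda$ are excluded by the stated assumption, which they are, since $a_{I\alpha}-a_{I\alpha}-\mbif_{II}\notin\Lambda$ is assumed.
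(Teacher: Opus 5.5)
Your proof is correct and follows the paper's argument. The upper bound comes from row and column lengths being at most the number of boxes. The lower bound for non-diagonal factors comes from the distance of $c$ to the lattice (or its closure), and the diagonal vector factors are of the form $-k\eps_1+l\eps_2$ with $k,l\geq 0$ not both zero, which stay away from zero for $b^2\notin[0,+\infty)$.

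There is one trivial slip. For diagonal factors with $\lambda=\mu=Y_{I\alpha}$ you can have $|\lambda|+|\mu|=2|Y|$, so use $|k|\leq\max(|\lambda|,|\mu|)\leq|Y|$ instead, or simply absorb the factor $2$ into $C$.
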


\begin{proof}
First, Young diagram rows/columns are bounded by the total number of boxes, so $|E(a,\lambda,\mu;i,j)| \leq C (|\lambda|+|\mu|)$ for all $(i,j)\in\lambda$, where $C>1$ does not depend on $\lambda,\mu,i,j$ (the claim is vacuuous for $\lambda=\emptyset$).  This proves the upper bound.  The lower bound relies on our genericity assumptions.  All factors except the diagonal vector multiplets take the form $a+k\eps_1+l\eps_2$ for some $k,l\in\ZZ$ and $a$ consisting of differences of Coulomb branch or mass parameters, which are assumed to stay away from the closure $\overline{\Lambda(\eps_1,\eps_2)}$.  The diagonal vector multiplet contributions are, for $(i,j)\in\lambda$ being one of the Young diagrams~$Y_{I\alpha}$,
\beal
E(0,\lambda,\lambda;i,j) & = - \eps_1(\lambda'_i - j) + \eps_2 (\lambda_j - i + 1) ,
\\
E(-\eps_1-\eps_2,\lambda,\lambda;i,j) & = - \eps_1(\lambda'_i - j + 1) + \eps_2 (\lambda_j - i) .
\ee
These take the form $-k\eps_1+l\eps_2$ for $k,l\geq 0$ and not both zero.
Under our assumption $b^2 = \eps_1/\eps_2 \in \CC\setminus [0,+\infty)$ the set of such $-k\eps_1+l\eps_2$ is bounded away from zero.
\end{proof}

\begin{proposition}\label{prop:removing-factors}
  Consider a product $Z_Y^\circ(m,a)$ defined by removing some factors $E(\dots)$ from the factors $\zvec,\zfun,\zbif$ of~$Z_Y(m,a)$ listed in~\eqref{Zinst-detail}, and consider the series
  \be
  \Zinst^\circ = \sum_{Y_1,\dots,Y_K} q_1^{|Y_1|} \dots q_K^{|Y_K|} Z_Y^\circ(m,a) .
  \ee
  If the number of removed factors is bounded as $o(n/\log n)$ where $n=|Y|$ is the total instanton number, then the series $\Zinst^\circ$ has the same absolute convergence domain as the Nekrasov instanton sum~$\Zinst$.
\end{proposition}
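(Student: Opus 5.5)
The plan is to compare the two series term by term, using Lemma~\ref{lem:bound-E}. For each collection~$Y$ with total instanton number $n=|Y|$, the quotient $Z_Y(m,a)/Z_Y^\circ(m,a)$ is a product of the removed factors $E(\dots)^{\pm 1}$. The sign depends on whether the factor sat in a numerator ($\zfun$, $\zbif$) or in a denominator ($\zvec$). Lemma~\ref{lem:bound-E} gives $1/C\leq|E|\leq Cn$ for every factor, so each removed factor, or its inverse, has modulus between $(Cn)^{-1}$ and $Cn$. With $r(n)=o(n/\log n)$ removed factors this yields
\be
(Cn)^{-r(n)} \leq \frac{|Z_Y(m,a)|}{|Z_Y^\circ(m,a)|} \leq (Cn)^{r(n)} .
\ee
For $n=0$ nothing is removed and the ratio equals~$1$, so we may take $n\geq 2$ throughout.

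The key observation is that $(Cn)^{r(n)}=\exp\bigl(r(n)(\log C+\log n)\bigr)=\exp(o(n))$. Hence for every $\delta>0$ there is $n_0$ such that $(Cn)^{\pm r(n)}\leq(1+\delta)^n$ for $n\geq n_0$, and a constant $A_\delta$ covers the finitely many $n<n_0$. Because $n=\sum_I|Y_I|$, the factor $(1+\delta)^n$ is the same as rescaling each $q_I$ to $(1+\delta)q_I$. Applying the upper bound to the absolute sums gives
\be
\sum_Y \prod_I|q_I|^{|Y_I|}\,|Z_Y| \leq A_\delta \sum_Y \prod_I\bigl((1+\delta)|q_I|\bigr)^{|Y_I|}\,|Z_Y^\circ| ,
\ee
and the lower bound gives the same inequality with $Z$ and $Z^\circ$ exchanged.

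Now take $q$ in the interior of the absolute convergence domain of $\Zinst^\circ$. By openness there is $\delta>0$ with $(1+\delta)q$ still in that domain. The inequality then shows that $\Zinst$ converges absolutely at~$q$, and in fact on a neighbourhood of~$q$, so $q$ lies in the interior of the domain of~$\Zinst$. The symmetric argument gives the reverse inclusion, so the two open domains coincide. Only the boundary may differ, as the footnote anticipates.

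The only delicate point is that the removed factors come with both signs of exponent. So we need the two-sided annulus bound of Lemma~\ref{lem:bound-E}, not just an upper bound, and the genericity assumptions enter exactly through the lower bound $1/C$. The rest is the elementary estimate $r(n)\log n=o(n)$, which is exactly why the threshold in the statement is $o(n/\log n)$.
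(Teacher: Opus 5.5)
Your proof is correct and follows essentially the same route as the paper: Lemma~\ref{lem:bound-E} bounds the ratio $|Z_Y|/|Z_Y^\circ|$ by $\exp(\pm o(n))$ when $o(n/\log n)$ factors are removed, this is absorbed by an arbitrarily small rescaling of the $q_I$, and the symmetric argument shows that each region contains the interior of the other. Your handling of the removed $\zvec$ factors, which enter inverted and give the two-sided interval $[(Cn)^{-r(n)},(Cn)^{r(n)}]$, is slightly more careful than the paper's interval $[C^{-p},(Cn)^p]$, which literally bounds only the product of the removed factors themselves.
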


\begin{proof}
  All factors $|E(\dots)|$ involved in $|Z_Y(m,a)|$ belong to an interval $[1/C,Cn]$ as stated in Lemma~\ref{lem:bound-E}.  Denoting by $p$ the number of factors being removed, the product of factors that are being removed thus belongs to an interval $[C^{-p},(Cn)^p]$.  If $p=o(n/\log n)$ then this interval can also be described as $\exp(o(n))$, which can be compensated by an arbitrarily small reduction in the modulus of instanton counting parameters.  Specifically, if the Nekrasov instanton sum expression of~$\Zinst$ converges absolutely for some $q_1,\dots,q_K$, then the analogous expression of $\Zinst^\circ$ converges absolutely for any $\tilde q_1,\dots,\tilde q_K$ with each $|\tilde q_I|<|q_I|$, and the same holds with $\Zinst$ and $\Zinst^\circ$ swapped.  The regions of absolute convergence of $\Zinst$ and of $\Zinst^\circ$ each contain the interior of the other one, so their interior (the absolute convergence domain) is equal.
\end{proof}

\paragraph{Factors are not too dense}

We now state how many factors $E(\cdots)$ can be within a given bounded set.  For definiteness we state it for disks, but it holds for any set, with $r$ replaced by the set's diameter.

\begin{lemma}\label{lem:bound-density}
  For $\eps_1,\eps_2\in\CC\setminus\{0\}$ with $b^2=\eps_1/\eps_2\in\CC\setminus[0,+\infty)$, there exists a constant $C>0$, such that for any $A\in\CC$ and any disk of center $a\in\CC$ and radius $r>0$, and any pair of partitions $\lambda,\mu$, the number of boxes $(i,j)\in\lambda$ with $E(0,\lambda,\mu;i,j)$ in that disk admits the following bound,
  \be
  \#\bigl\{(i,j)\in\lambda \bigm| |E(A,\lambda,\mu;i,j) - a| < r \bigr\} \leq C (r+1) \sqrt{|\lambda|} .
  \ee
\end{lemma}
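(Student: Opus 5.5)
We will prove the bound by decomposing $\lambda$ along one of its two directions and showing that, along each line of boxes, the values $E(A,\lambda,\mu;i,j)$ are spread out, while a disk can only meet few lines.

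The first step is to fix a column index $j$ and set $k=\mu'_i-j$ and $l=\lambda_j-i$. As $i$ runs over $1,\dots,\lambda_j$, the quantity $l=\lambda_j-i$ strictly decreases by one at each step. Meanwhile $k=\mu'_i-j$ is non-increasing in $i$, since $\mu'$ is a partition. Hence along row $j$ the points
\[
E=A+\eps_2-\eps_1 k+\eps_2 l
\]
form a lattice walk $(k,l)\in\ZZ^2$ in which $l$ drops by exactly one and $k$ weakly drops at every step. The map $(k,l)\mapsto -\eps_1 k+\eps_2 l$ is injective on $\ZZ^2$ for $b^2\notin\RR$. For $b^2<0$ it is merely non-degenerate on the cone $\{k\geq0,l\geq0\}$ up to a sign flip.

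The key observation is that the walk moves in the direction $(\Delta k,\Delta l)$ with $\Delta l=-1$ and $\Delta k\leq 0$. The increments are therefore $-\eps_2+\eps_1 s$ with $s=-\Delta k\geq 0$, that is, $\eps_2(b^2 s-1)$. For $b^2\in\CC\setminus[0,+\infty)$ all such increments lie in a closed cone $\Gamma$ of opening angle less than $\pi$: it is generated by $-\eps_2$ and $\eps_1$, and these two vectors are not positively collinear precisely because $b^2\notin[0,+\infty)$. Moreover each increment has modulus at least $c=\min_{s\geq0}|\eps_2(b^2s-1)|>0$.

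Projecting onto a unit direction $u$ interior to the dual cone of $\Gamma$, the projection of $E$ therefore increases by at least some $c'>0$ per step. Consequently at most $2r/c'+1$ boxes of a given row $j$ have $E$ in the disk of radius $r$. The same argument applied to a fixed column $i$ gives the analogous bound per column: varying $j$ changes $k$ by exactly one and $l$ weakly, with increments in the same cone up to sign.

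The second step combines rows and columns to reach $\sqrt{|\lambda|}$. Let $h$ be the side of the Durfee square of $\lambda$, so $h\leq\sqrt{|\lambda|}$. Every box $(i,j)\in\lambda$ has $i\leq h$ or $j\leq h$, since otherwise $(h+1,h+1)\in\lambda$. The boxes with $j\leq h$ lie in at most $h$ rows, and those with $i\leq h$ lie in at most $h$ columns. Applying the per-line bound to each gives
\[
\#\{\cdots\}\leq 2h\,(2r/c'+1)\leq C(r+1)\sqrt{|\lambda|}.
\]
The constant $C$ depends only on $\eps_1,\eps_2$, and not on $A$, $a$, $\lambda$ or $\mu$.

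The main obstacle is the cone argument in the first step: one must check that the monotonicity of $\mu'$ (for rows) and of $\lambda$ (for columns) really forces all increments into a common strictly convex cone bounded away from $0$. This is exactly where $b^2\notin[0,+\infty)$ enters. When $b^2>0$, the vectors $-\eps_2$ and $\eps_1$ become opposite, so the increments $\eps_2(b^2s-1)$ can vanish or cancel and the walk may return near its start.
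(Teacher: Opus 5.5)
Your proof is correct, but it organizes the count differently from the paper. Both arguments use the same kind of monotone real-linear functional. The paper divides by $\sqrt{\eps_1\eps_2}=\eps_2 b$ (with $\Im b>0$) and uses $\phi(z)=\Im\bigl(z/\sqrt{\eps_1\eps_2}\bigr)$, for which $\phi(\eps_1)=\Im b>0$ and $\phi(-\eps_2)=\Im(-1/b)>0$. This is just a specific choice, up to scaling, of your direction $u$ in the interior of the dual cone of $\Gamma$.

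The difference is in where the two factors of $(r+1)\sqrt{|\lambda|}$ come from.
\begin{itemize}
\item The paper covers the disk by $O(r+1)$ strips of $\phi$-width $\gamma$. It observes that the boxes in one strip occupy pairwise distinct rows and columns, so $\lambda$ contains a staircase of size $q_p$, whence $q_p\leq\sqrt{2|\lambda|}$ per strip.
\item You get $O(r+1)$ boxes per line from the step size $c'$, and cover $\lambda$ by at most $2h\leq 2\sqrt{|\lambda|}$ lines via the Durfee square.
\end{itemize}
Your route is a bit more elementary. The staircase containment needs a small argument that the paper leaves implicit, whereas the Durfee observation is immediate. The paper's per-strip statement, on the other hand, is the form it reuses later (for $\mu=\emptyset$ with $b^2>0$, and for the contents/hooks comparison), though your line-counting would handle those cases equally well.

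Two small corrections:
\begin{itemize}
\item The vectors $\eps_1$ and $-\eps_2$ are positively collinear precisely when $b^2<0$, in which case $\Gamma$ is a ray. What $b^2\notin[0,+\infty)$ rules out is that they are opposite, and that is what gives opening angle $<\pi$.
\item The column increments $\eps_1-t\eps_2$ ($t\geq 0$) lie in $\Gamma$ itself, not merely up to sign. Their minimal projection is $\Re(\bar u\eps_1)$ rather than $\Re(-\bar u\eps_2)$, so you should take $c'=\min\bigl(\Re(\bar u\eps_1),\Re(-\bar u\eps_2)\bigr)$ to handle rows and columns simultaneously.
\end{itemize}
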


\begin{proof}
  The parameter~$A$ can be immediately set to $A=0$ by shifting~$a$.

  Since $b^2$ is not positive, its square-roots are not real.  We fix for definiteness a square-root $b$ with $\Im b>0$, and correspondingly $\sqrt{\eps_1\eps_2} = \eps_2 b$.
  We introduce
  \be
  \gamma = \min(\Im b,-\Im(1/b)) > 0 .
  \ee
  We divide throughout by~$\sqrt{\eps_1\eps_2}$.  The disk of radius $r/|\eps_1\eps_2|^{1/2}$ and centered at $a/\sqrt{\eps_1\eps_2}$ is covered by $O(r+1)$ horizontal strips of width~$\gamma$ defined by
  \be
  \Sigma_p \coloneqq \{z\mid p\leq\Im z/\gamma<p+1\} \subset \CC .
  \ee
  The condition for the contribution of $(i,j)\in\lambda$ to belong to this strip, namely $E(0,\lambda,\mu;i,j)/\sqrt{\eps_1\eps_2}\in\Sigma_p$, is conveniently rewritten by separating the $i$ and $j$ dependence,
  \be
  p \leq
  \frac{\Im b}{\gamma} \Bigl( j - \frac{\Im(-1/b)}{\Im b} \lambda_j \Bigr)
  + \frac{\Im(-1/b)}{\gamma} \Bigl( i - 1 - \frac{\Im b}{\Im(-1/b)} \mu'_i \Bigr)
  < p+1
  \ee
  Since $\lambda_j$ (resp.~$\mu_i$) is monotonically (weakly) decreasing in~$j$ (resp.\ in~$i$), the terms in parentheses increase by at least~$1$ when~$j$ (resp.~$i$) is incremented.  Since the coefficient $\Im b/\gamma$ and $\Im(-1/b)/\gamma$ are both $\geq 1$, the boxes $(i,j)$ satisfying these bounds (for a given~$p$) must all lie in different rows and different columns of~$\lambda$.  Denoting by $q_p$ the number of such boxes, the partition~$\lambda$ must thus contain the partition $(q_p,q_p-1,\dots,1)$, hence $q_p(q_p+1)/2 \leq |\lambda|$ and finally $q_p\leq \sqrt{2|\lambda|}$.  Summing over the strips $\Sigma_p$ that cover the appropriate disk gives the desired conclusion.
\end{proof}

\subsection{The effect of mass parameters}

We now embark in the proof that the masses and Coulomb branch parameters (denoted below by $A,B$), or more generally any bounded shift in the various factors, do not affect convergence.
This is done by showing sub-exponential bounds on the ratio of products with or without shifts.  As in the previous section, such sub-exponential factors do not affect the (open) absolute convergence domain.



\begin{proposition}[Mass parameters are mostly irrelevant]
  \label{prop:irrelevant-masses}
  Fix $\eps_1,\eps_2\in\CC\setminus\{0\}$ with $b^2=\eps_1/\eps_2\in\CC\setminus[0,+\infty)$,
  and $A,B\in\CC$ with $B\not\in\overline{\ZZ\eps_1+\ZZ_{<0}\eps_2}$ (closure of the set of $k\eps_1+l\eps_2$ with $k,l$ integer and $l<0$).  Then there exists a constant $C>0$ such that for any pair of partitions $\lambda,\mu$ one has the subexponential bound
  \bel{eq10olambda}
  \biggl| \prod_{(i,j)\in\lambda} \frac{E(A,\lambda,\mu;i,j)}{E(B,\lambda,\mu;i,j)} \biggr|
  \leq e^{C|\lambda|^{1/2}(1+\log|\lambda|)} .
  \ee
  For the special case $\mu=\lambda$ the condition on~$B$ can be relaxed to $B\not\in\ZZ_{\geq 0}\eps_1+\ZZ_{<0}\eps_2$ (which is a closed set).  For the special case $\mu=\emptyset$ the condition on~$B$ can be relaxed to $B\not\in\overline{\ZZ_{<0}\eps_1+\ZZ_{<0}\eps_2}$ and $b^2\in\CC\setminus\{0\}$ can be arbitrary.
\end{proposition}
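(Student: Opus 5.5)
We write each ratio as $1+(A-B)/E(B,\lambda,\mu;i,j)$, so that
\[
\log\biggl|\prod_{(i,j)\in\lambda}\frac{E(A,\lambda,\mu;i,j)}{E(B,\lambda,\mu;i,j)}\biggr|
\leq \sum_{(i,j)\in\lambda}\log\Bigl(1+\frac{|A-B|}{|E(B,\lambda,\mu;i,j)|}\Bigr).
\]
The plan is to split the boxes according to $|E(B,\lambda,\mu;i,j)|$ and estimate with Lemma~\ref{lem:bound-density}. Set $n=|\lambda|$ and $D=|A-B|$.

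\textbf{Step 1: a uniform lower bound.} We first check that $|E(B,\lambda,\mu;i,j)|\geq \delta>0$ for all $\lambda,\mu,(i,j)$. Indeed $E(B,\lambda,\mu;i,j)=B+\eps_2(\lambda_j-i+1)-\eps_1(\mu'_i-j)$, and $\lambda_j-i+1\geq 1$ since $(i,j)\in\lambda$. Hence $E(B,\dots)$ lies in $B+\ZZ\eps_1+\ZZ_{>0}\eps_2$, and it vanishes only if $B\in\ZZ\eps_1+\ZZ_{<0}\eps_2$. Since $B$ is not in the closure of that set, $|E|$ is bounded below by the positive distance from $-B$ to it.

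For $\mu=\lambda$ the coefficient $\mu'_i-j=\lambda'_i-j\geq0$, so the relevant set is $\ZZ_{\geq0}\eps_1+\ZZ_{<0}\eps_2$. This set is discrete, hence closed, since $b^2\notin[0,+\infty)$ keeps it away from accumulation.

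For $\mu=\emptyset$ one has $\mu'_i=0$, so $E=B+\eps_1 j+\eps_2(\lambda_j-i+1)$ with both coefficients positive. Here nonvanishing only needs $B\notin\overline{\ZZ_{<0}\eps_1+\ZZ_{<0}\eps_2}$, with no condition on $b^2$.

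\textbf{Step 2: dyadic shells.} For $k\geq0$ let $S_k=\{(i,j)\in\lambda : 2^k\leq |E(B,\lambda,\mu;i,j)|<2^{k+1}\}$, together with the inner set of boxes where $|E|<1$. Lemma~\ref{lem:bound-density}, applied with the disk centered at $0$, gives $\#S_k\leq C(2^{k+1}+1)\sqrt n$, and the inner set contains $O(\sqrt n)$ boxes. Each inner box contributes at most $\log(1+D/\delta)=O(1)$, for $O(\sqrt n)$ in total. Each box of $S_k$ contributes at most $D2^{-k}$, so shell $k$ contributes $O(\sqrt n)$. Since $|E|\leq C'n$ by the argument of Lemma~\ref{lem:bound-E}, only $O(\log n)$ shells are nonempty. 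Summing gives $O(\sqrt n(1+\log n))$, which is \eqref{eq10olambda}; the lower bound on $\log|\cdot|$ is not needed.

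\textbf{Main obstacle: the case $\mu=\emptyset$ with arbitrary $b^2$.} Lemma~\ref{lem:bound-density} requires $b^2\notin[0,+\infty)$, so it cannot be used there. In this case, however, $E=B+\eps_1 j+\eps_2 i'$ with $i'=\lambda_j-i+1$, and $(i',j)$ ranges bijectively over the boxes of $\lambda$.

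We count boxes whose value of $\eps_1 j+\eps_2 i'$ lies in a disk of radius $r$. If $b^2$ is not real, distinct $(i',j)$ give lattice points whose values are separated by a fixed distance, so at most $O((r+1)^2)$ of them fall in the disk. If $b^2$ is real, project onto the direction of $\eps_2$: the constraint becomes $|j\,b^2+i'-c|\lesssim r$ inside $\lambda$, and for each fixed $j$ at most $O(r+1)$ values of $i'$ qualify. Combining this with a symmetric argument in $j$, and using that a box satisfies $i'j\leq n$, bounds the count by $O((r+1)\sqrt n)$. With this count in place of Lemma~\ref{lem:bound-density}, the dyadic argument of Step~2 goes through unchanged.
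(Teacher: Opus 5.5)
Your argument for the main bound follows the paper. You bound $\log|1+(A-B)/E(B,\lambda,\mu;i,j)|$ by $|A-B|/|E(B,\lambda,\mu;i,j)|$, use the uniform lower bound coming from $\lambda_j-i+1\geq1$, and count small factors with Lemma~\ref{lem:bound-density}. Your dyadic shells are just another bookkeeping of the paper's ordering of boxes by $|E|$: there the $s$-th box has $|E|\geq s/(C\sqrt{|\lambda|})-1$, which gives a harmonic sum $H_{|\lambda|}$.

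One step is false as written. You claim $|E(B,\lambda,\mu;i,j)|\leq C'n$ with $n=|\lambda|$, hence only $O(\log n)$ nonempty shells. But $E$ contains $-\eps_1(\mu'_i-j)$, and $\mu'_i$ is controlled by $|\mu|$, not by $|\lambda|$; the argument of Lemma~\ref{lem:bound-E} only gives $C(|\lambda|+|\mu|)$. For example, take $\lambda=(n)$ and $\mu'_i=4^{n+1-i}$ for $1\leq i\leq n$. The boxes $(i,1)$ with $i\leq n/2$ then fall in $\Theta(n)$ distinct shells. Summing ``$O(\sqrt n)$ per shell'' over shells gives at best $O(\sqrt n\,\log(|\lambda|+|\mu|))$, which is not uniform in $\mu$, as the proposition (and its use in Lemma~\ref{lem:SQCD-off-diag}) requires.

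The repair is immediate. Lump together all boxes with $|E|\geq n$: there are at most $n$ of them, each contributing at most $|A-B|/n$, so at most $|A-B|$ in total. Only the $O(\log n)$ shells with $2^k<n$ then need Lemma~\ref{lem:bound-density}. The paper's rank ordering avoids this automatically, since $s\leq|\lambda|$. For $\mu=\lambda$ and $\mu=\emptyset$ your bound $|E|=O(n)$ is correct. There is also a small sign slip in Step~1: the relevant distance is from $B$, not $-B$, to $\ZZ\eps_1+\ZZ_{<0}\eps_2$.

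For $\mu=\emptyset$ with $b^2>0$, your density count is a genuinely different and valid route.
\begin{itemize}
\item The paper transposes to reduce to $b^2\geq1$. Boxes with $jb^2+i$ in a unit interval then lie in distinct rows and columns, so $\lambda$ contains a staircase of size $q_p$, giving $q_p\leq\sqrt{2|\lambda|}$.
\item You instead use that $\lambda$ contains the rectangle spanned by any of its boxes, so $i'j\leq n$ and every box has $i'\leq\sqrt n$ or $j\leq\sqrt n$. Each row or column meets the strip $|jb^2+i'-c|<r'$ in $O(r'+1)$ boxes. This gives $O((r+1)\sqrt n)$ for every real $b^2\neq0$ at once, without transposition.
\end{itemize}

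For non-real $b^2$ nothing new is needed, since Lemma~\ref{lem:bound-density} already covers $\CC\setminus[0,+\infty)$. Your lattice count $O((r+1)^2)$ also works, but it cannot be plugged in ``unchanged'' as you state. Combine it with the trivial count $n$ via $\min(C(r+1)^2,n)\leq\sqrt{C}\,(r+1)\sqrt n$.
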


The proof of Proposition~\ref{prop:irrelevant-masses} occupies the rest of the subsection.
Observe that if both $A$ and $B$ are generic then both the ratios and their inverses are subexponential so that changing $A$ to $B$ will not change the absolute convergence domains of partition sums.


\paragraph{Proof for $b^2\in\CC\setminus[0,+\infty)$.}

We prove the main part of the proposition, namely the bound~\eqref{eq10olambda}.
We bound $1+X\leq\exp(X)$ to find
\bel{eq27}
\biggl| \prod_{(i,j)\in\lambda} \frac{E(A,\lambda,\mu;i,j)}{E(B,\lambda,\mu;i,j)} \biggr|
\leq \exp\biggl(\sum_{(i,j)\in\lambda} \frac{|B-A|}{E(B,\lambda,\mu;i,j)}\biggr) .
\ee
Our genericity assumption on~$B$ ensures that the distance
\bel{lower20}
d_B \coloneqq \dist\bigl(B, \ZZ\eps_1 + \ZZ_{<0}\eps_2\bigr)
= \inf_{k,l\in\ZZ,l<0} \bigl|B-k\eps_1-l\eps_2\bigr| > 0
\ee
is positive since we assumed that $B$ does not belong to the closure of the lattice.  All denominators are bounded below by this distance: indeed, for $(i,j)\in\lambda$ one has $\lambda_j-i+1>0$ so
\bel{E-geq-dB}
|E(B,\lambda,\mu;i,j)|
= \bigl|B-(\mu'_i-j)\eps_1+(\lambda_j-i+1)\eps_2\bigr| \geq d_B .
\ee
Let us order boxes $(i,j)\in\lambda$ as $(i_1,j_1),(i_2,j_2),\dots,(i_{|\lambda|},j_{|\lambda|})$ by non-decreasing values of $|E(B,\lambda,\mu;i,j)|$.
By Lemma~\ref{lem:bound-density}, the number of boxes satisfying $|E(B,\lambda,\mu;i,j)|<r$ for some $r>0$ is bounded by $C(r+1)\sqrt{|\lambda|}$, thus any $1\leq s\leq|\lambda|$ obeys
\be
s < C \sqrt{|\lambda|} \bigl(1+|E(B,\lambda,\mu;i_s,j_s)|\bigr) ,
\ee
hence by combining with~\eqref{E-geq-dB},
\be
|E(B,\lambda,\mu;i_s,j_s)| \geq \max\Bigl(d_B, \frac{s}{C \sqrt{|\lambda|}} - 1 \Bigr) .
\ee
We then split the sum in~\eqref{eq27} into contributions for $s\gtrless 2C\sqrt{|\lambda|}$ to get
\beal
\biggl| \prod_{(i,j)\in\lambda} \frac{E(A,\lambda,\mu;i,j)}{E(B,\lambda,\mu;i,j)} \biggr|
& \leq \exp\biggl(
\frac{2C|B-A|}{d_B}\sqrt{|\lambda|}
+ |B-A|C\sqrt{|\lambda|} H_{|\lambda|}
\biggr)
\\
& \leq \exp\bigl(O(|\lambda|^{1/2}(1+\log|\lambda|))\bigr)
\ee
where $H_p=1+1/2+1/3+\dots+1/p$ is the harmonic number.

\paragraph{Special cases $\mu=\lambda$ and $\mu=\emptyset$.}

When one restricts attention to a single partition $\mu=\lambda$, one gets $\mu'_i-j = \lambda'_i-j\geq 0$ for all boxes $(i,j)\in\lambda$, so that the reasoning above goes through with $d_B$ redefined to be the distance of~$B$ to $\ZZ_{\geq 0}\eps_1+\ZZ_{<0}\eps_2$.  For $b^2=\eps_1/\eps_2\in\CC\setminus[0,+\infty)$ this set is discrete so there is no need to take its closure when stating the condition on~$B$ in Proposition~\ref{prop:irrelevant-masses}.

Likewise, for $\mu=\emptyset$ one has $\mu'_i-j<0$, so $B$ has to avoid the closure of $\ZZ_{<0}\eps_1+\ZZ_{<0}\eps_2$.  This argument covers the range $b^2\in\CC\setminus[0,+\infty)$.  For the additional range $b^2\in(0,+\infty)$ claimed in Proposition~\ref{prop:irrelevant-masses}, different arguments are needed, which are given momentarily.

\paragraph{Remark: possible exponential growth for $b^2>0$.}

Before moving on to $\mu=\emptyset$, it is interesting to explain what goes wrong for $b^2>0$ with other partitions~$\mu$.
We point out a class of partitions (with $\mu=\lambda$) for which the ratio in~\eqref{eq10olambda} grows exponentially in~$|\lambda|$ for positive~$b^2$, so that changing mass or Coulomb parameters may shrink the absolute convergence domain.

We let $\floor{x}$ (resp.\ $\ceil{x}$) denote the largest (resp.\ smallest) integer $\leq x$ (resp.\ $\geq x$).
For some (large) integer $p\geq 1$, consider the partition
\be
\lambda_j = 1 + \bigfloor{b^2 (p-j)} , \qquad j = 1,\dots,p ,
\ee
whose Young diagram is a discrete approximation of a triangle with slope~$b^2$.
The conjugate partition is
\be
\lambda'_i = \max_{\lambda_j\geq i}(j)
= \max_{i - 1 \leq b^2 (p-j)}(j)
= p - \bigceil{b^{-2} (i - 1)} , \qquad i = 1,\dots,\lambda_1 .
\ee
For any box $(i,j)$ in this partition one has
\beal
\quad & \unquad \frac{1}{\eps_2} \bigl( -(\lambda'_i-j)\eps_1+(\lambda_j-i)\eps_2 \bigr)
\\
& = -\bigl(p-j - \bigceil{b^{-2} (i - 1)}\bigr) b^2
+ 1 + \bigfloor{b^2 (p-j)} - i
\\
& = \Bigl( \bigfloor{b^2 (p-j)} - b^2 (p-j) \Bigr)
+ b^2 \Bigl( \bigceil{b^{-2} (i - 1)} - b^{-2} (i-1) \Bigr)
\\
& \in (-1,b^2)
\ee
since the floor and ceiling functions differ from their argument by less than~$1$.
Then
\be
\biggl| \prod_{(i,j)\in\lambda} \frac{A-(\lambda'_i-j)\eps_1+(\lambda_j-i+1)\eps_2}{B-(\lambda'_i-j)\eps_1+(\lambda_j-i+1)\eps_2} \biggr|
\geq \biggl( \inf_{\substack{x\in(0,b^2+1)\\x\neq -B/\eps_2}} \Bigl| \frac{A+\eps_2 x}{B+\eps_2 x} \Bigr| \biggr)^{|\lambda|} .
\ee
For fixed $b^2$ and~$B$, one can choose $A$ large enough to ensure that the infimum is larger than~$1$.  With that choice, the lower bound found here grows exponentially with~$|\lambda|$ as announced, instead of having the sub-exponential bound $\exp(C|\lambda|^{1/2}(1+\log|\lambda|))$.

\paragraph{Proof for $\mu=\emptyset$ and $b^2>0$.}

We now address the final part of the Proposition, which is to extend to positive~$b^2$ the sub-exponential bound for $\mu=\emptyset$.
After relabelling $(i,j)\to(1+\lambda_j-i,j)$, \eqref{eq10olambda}~reduces to the claim
\bel{eq9olambda}
\biggl| \prod_{(i,j)\in\lambda} \frac{A+j\eps_1+i\eps_2}{B+j\eps_1+i\eps_2} \biggr|
\leq e^{C|\lambda|^{1/2} (1+\log|\lambda|)} .
\ee
Our genericity assumption on~$B$ ensures that denominators are bounded below.  Following the same strategy as above, it is enough to show that there are at most $O(\sqrt{|\lambda|})$ boxes with $B+j\eps_1+i\eps_2$ (or simply $j\eps_1+i\eps_2$) lying in any fixed bounded region.

Transposing the diagram exchanges $\eps_1,\eps_2$, hence inverts~$b^2$, so we can assume without loss of generality that $b^2\geq 1$.  After dividing by~$\eps_2$, the task is simply to show that at most $O(\sqrt{|\lambda|})$ boxes have $jb^2+i$ in a given (real) interval.  It is enough to show this for intervals $[p,p+1)$.  Two boxes $(i,j)$ with $jb^2+i\in[p,p+1)$ cannot share the same value of~$i$ nor~$j$ thanks to $b^2\geq 1$.  Thus, denoting by $q_p$ the number of boxes with $jb^2+i\in[p,p+1)$, the partition~$\lambda$ must contain the partition $(q_p,q_p-1,\dots,2,1)$, and as a result $q_p(q_p+1)/2\leq|\lambda|$.  This gives the desired bound $q_p=O(\sqrt{|\lambda|})$ on the density of factors, which implies~\eqref{eq9olambda} and concludes the proof of Proposition~\ref{prop:irrelevant-masses}.

\section{SQCD with negative \(b^2\)}
\label{sec:SQCD-neg}

In this section we fix $b^2=\eps_1/\eps_2\in(-\infty,0)$ and prove that there is a positive radius of convergence.  This includes the unrefined case $b^2=-1$.
Note that for $b^2$ rational the lattice $\Lambda(\eps_1,\eps_2)$ is discrete, and for $b^2$ irrational (and negative) the lattice is dense in $\overline{\Lambda(\eps_1,\eps_2)} = \RR\eps_1 = \RR\eps_2$.

\subsection{Vector multiplets: diagonal/off-diagonal factors}
\label{sec:SQCD-off-diag}

The vector multiplet contribution involves factors associated to pairs of indices $1\leq\alpha,\beta\leq N$.  Here we show, by following the proof of Lemma 3.9 in~\cite{2212.06741}, that each off-diagonal factor labeled by $(\alpha,\beta)$ is bounded by the corresponding diagonal factor $(\alpha,\alpha)$ up to an exponential factor (cf.\ Proposition~\ref{prop:SQCD-off}).  This boils down to the following inequality for pairs of Young diagrams.

\begin{lemma}\label{lem:SQCD-off-diag}
  For $b^2\in(-\infty,0)$ and $A,B\in\CC$ with $B\not\in\overline{\Lambda(\eps_1,\eps_2)}$, there exists a constant $C>0$ such that for any partitions $\lambda,\mu$ one has
  \bel{SQCD-off-diag}
  \prod_{(i,j)\in\lambda} \Bigl| \frac{E(A, \lambda, \lambda; i,j)}{E(B, \lambda, \mu; i,j)} \Bigr|
  \leq 8^{|\lambda|} e^{C|\lambda|^{1/2}(1+\log|\lambda|)} .
  \ee
\end{lemma}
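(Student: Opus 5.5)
The plan is to normalize, then reduce everything to a one-column comparison in which numerator and denominator differ by a constant shift. Divide every factor by $\eps_2$ and write $\eps_1=-t\eps_2$ with $t=-b^2>0$. For a box $(i,j)\in\lambda$, set $h(i,j)=t(\lambda'_i-j)+(\lambda_j-i+1)$. Then $E(0,\lambda,\lambda;i,j)/\eps_2=h(i,j)$ is a positive real ``weighted hook length'', and
\[
E(B,\lambda,\mu;i,j)/\eps_2 = B/\eps_2 + h(i,j) + t(\mu'_i-\lambda'_i).
\]

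\textbf{Step 1: remove $A$ and simplify $B$.} First I would apply Proposition~\ref{prop:irrelevant-masses} twice, losing only factors $e^{C|\lambda|^{1/2}(1+\log|\lambda|)}$.
\begin{itemize}
\item In the special case $\mu=\lambda$, it lets me replace $A$ by $0$ in the numerator. The ratio $E(A,\lambda,\lambda)/E(0,\lambda,\lambda)$ is subexponential because $0\notin\ZZ_{\geq0}\eps_1+\ZZ_{<0}\eps_2$.
\item In the general case, it lets me replace $B$ by a convenient $B'=\sqrt{-1}\,\delta\eps_2$ with $\delta>0$. This requires $B$ and $B'$ to lie outside the closure of $\ZZ\eps_1+\ZZ_{<0}\eps_2\subset\RR\eps_2$. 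That holds for $B'$ trivially, and for $B$ because $B\notin\overline{\Lambda(\eps_1,\eps_2)}$.
\end{itemize}
After this, each denominator satisfies $|E(B')/\eps_2|=|h(i,j)+s_i+\sqrt{-1}\,\delta|$ with $s_i=t(\mu'_i-\lambda'_i)$. The key point is that $s_i$ depends only on the column~$i$.

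\textbf{Step 2: factorize by columns.} The bound then factorizes over columns, so it suffices to show, for each column $i$ of height $m=\lambda'_i$ and real shift $s$, that
\[
\prod_{j=1}^{m}\frac{h_j}{|h_j+s+\sqrt{-1}\,\delta|}\le c^{m}\quad\text{up to subexponential corrections.}
\]
Two properties of the $h_j=h(i,j)$ make this work:
\begin{itemize}
\item they strictly decrease in $j$ with gaps $h_j-h_{j+1}=t+(\lambda_j-\lambda_{j+1})\ge t$;
\item they satisfy $h_j\ge 1$.
\end{itemize}
Hence the $k$-th smallest value obeys $h\ge 1+t(k-1)$. If $s\ge0$, every ratio is at most $1$. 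If $s<0$, I split the column into two groups.
\begin{itemize}
\item The boxes with $h_j\ge 2|s|$ each contribute a ratio at most $2$.
\item The remaining $p\lesssim 2|s|/t+1$ boxes have numerators at most $2|s|$. Their denominators $|h_j-|s|+\sqrt{-1}\,\delta|$ are $t$-spaced values around $0$, so their product is at least roughly $\delta^2\bigl((p/2-1)!\,t^{p/2}\bigr)^2$.
\end{itemize}
By Stirling, $(2|s|)^p$ divided by this product is at most $(\text{const})^p$, with the constant independent of $t$ once $|s|\approx tp/2$ is used. Choosing the split point and the grouping carefully should give the stated base~$8$; I would aim for $2$ on the first group and $4$ on the second.

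\textbf{Main obstacle.} The delicate part is the per-column leftover factors such as $\delta^{-2}$ coming from the one or two denominators closest to $0$ in each column. The number of columns can be as large as $|\lambda|$, so these factors would spoil the base~$8$. I would handle them globally rather than column by column, using the mechanism behind Lemma~\ref{lem:bound-density}. Only $O(\sqrt{|\lambda|})$ boxes have $|E(B',\lambda,\mu;i,j)|$ below any fixed threshold. All such small denominators together therefore cost only $\delta^{-O(\sqrt{|\lambda|})}$, as in the proof of Proposition~\ref{prop:irrelevant-masses}. In the column estimate itself, I would treat denominators of size $\ge1$ exactly and absorb the rest into the subexponential factor. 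Doing this Stirling bookkeeping uniformly in $t$, so that the base really is~$8$, is where I expect most of the work.
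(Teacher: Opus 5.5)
Your normalization, the use of Proposition~\ref{prop:irrelevant-masses} to set $A=0$ and $B=\sqrt{-1}\,\delta\eps_2$, the column factorization with $s_i=t(\mu'_i-\lambda'_i)$, the case $s_i\ge 0$, and the global disposal of small denominators via Lemma~\ref{lem:bound-density} are all sound. They match the paper, which divides by $\eps_1$ instead, so that its $d_{ij}=-(h_j+s_i)/t$ and $\alpha_i=s_i/t$, and its $P_0$ collects the small-denominator boxes.

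The gap is in the per-column estimate for $s_i<0$. Write $a=|s_i|/t=\lambda'_i-\mu'_i$, a positive integer with $a\le\lambda'_i$. You only know the upper bound $p\lesssim 2a$, not $|s|\approx tp/2$. When some $\lambda_j-\lambda_{j+1}$ is large, the $h_j$ jump past $2|s|$, so $p$ can be far smaller than $a$. Then $(2|s|)^p/\bigl(t^p((p/2)!)^2\bigr)\approx(4ea/p)^p$ is not $O(c^p)$.

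This is not an artifact of the crude numerator bound. Take $t=1$, $\mu=\emptyset$, and $\lambda$ with $m$ rows, $\lambda_j=3m$ for $j\le m-p$ and $\lambda_j=m$ for $j>m-p$. In column $i=1$, your second group consists of $p$ boxes with $h_j-|s|=0,1,\dots,p-1$. Even after dropping the box with $h_j=|s|$, their product of ratios is $\binom{m+p-1}{p-1}$, which for $p\sim\sqrt m$ beats $c^p$ for every fixed $c$. So ``$2$ per box of the first group, $4$ per box of the second'' cannot hold; the cost must be charged to $a\le\lambda'_i$, i.e.\ to the whole column. Even with that correction, bounding every numerator by $2|s|$ gives roughly $(2a)^{n_+}/n_+!\cdot(2a)^{n_-}/n_-!\le(2e)^{2a}$, hence a base of order $8e^2$. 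That suffices for a positive radius in Theorem~\ref{thm:SQCD}, but not for the stated $8^{|\lambda|}$.

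The missing idea is to keep the exact numerators and use monotonicity. First remove the boxes with $\bigl|h_j-|s|\bigr|<t$. The threshold must be $t$, not $1$. These boxes still have $E$ in a fixed disk, so Lemma~\ref{lem:bound-density} still applies. Every remaining ratio in column~$i$ is then at most $(a+k)/k$ if $h_j>|s|$, or $(a-k)/k$ if $h_j<|s|$, where $k=\bigl|h_j-|s|\bigr|/t\ge 1$. Because $h_j-h_{j+1}\ge t$, the $k$'s on each side are $1$-separated, so the $r$-th smallest is at least $r$. Both functions decrease in $k$, so the column product is at most $\binom{n_-+a}{a}\binom{a-1}{n_+}\le 2^{n_-+2a-1}\le 8^{\lambda'_i}$. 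This needs no split at $h_j=2|s|$ and no Stirling estimate, and it is exactly the paper's bound on $P_-^{(i)}P_+^{(i)}$.
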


\begin{proof}
  We have seen in Proposition~\ref{prop:irrelevant-masses} that changing the parameters $A,B$ only brings in factors of the form $\exp(C|\lambda|^{1/2}(1+\log|\lambda|))$ as long as their values avoid the relevant lattices.  More precisely it is enough to prove~\eqref{SQCD-off-diag} for some fixed $A,B$ with $A\not\in\ZZ_{\geq 0}\eps_1+\ZZ_{<0}\eps_2$ and $B\not\in\overline{\ZZ\eps_1+\ZZ\eps_2}$.  We select for simplicity
\be
A=0 , \qquad B=\beta\eps_1 ,
\ee
for some purely imaginary $\beta\in\RR\sqrt{-1}$ with $\Im\beta\in(-\infty,0)\cup(0,+\infty)$.  We keep $\beta$ non-zero to avoid singular factors.  Its value is unimportant.  Then (dividing by~$\eps_1$ for convenience)
\beall{sqcd-EE}
\frac{1}{\eps_1} E(0,\lambda,\lambda;i,j) & = - (\lambda'_i - j) + b^{-2}(\lambda_j + 1 - i) \in (-\infty,b^{-2}] ,
\\
\frac{1}{\eps_1} E(\beta\eps_1,\lambda,\mu;i,j) & = \beta - (\mu'_i - j) +  b^{-2} (\lambda_j + 1 - i) \in \beta + \RR .
\ee
In addition to $\beta$ above, we introduce the notation
\be
\alpha_i \coloneqq \mu'_i - \lambda'_i , \qquad
d_{ij} \coloneqq - (\mu'_i - j) + b^{-2}(\lambda_j + 1 - i) , \qquad (i,j)\in\lambda ,
\ee
so that the two lines in~\eqref{sqcd-EE} read $\alpha_i+d_{ij}$ and $\beta+d_{ij}$, respectively.
A particularly useful inequality (first line of~\eqref{sqcd-EE}) is
\be
\alpha_i + d_{ij} \leq b^{-2} < 0 .
\ee

We split the product~\eqref{SQCD-off-diag} into three products $P_-$, $P_0$, $P_+$ over boxes $(i,j)\in\lambda$ for which $d_{ij}$ is in $(-\infty,-1]$, $(-1,1)$, and $[1,+\infty)$, respectively:\footnote{This splitting is adapted from the case $b^2=-1$ in \cite{2212.06741}, where $d_{ij}$~is an integer so that the interval $(-1,1)$ reduces to $\{0\}$.  In that reference, $P_0$ is dubbed $B_1$ and the separation of their $B_2$ into $P_\pm$ is only performed in their Appendix~C.}
\beal
P_0 = \prod_{(i,j)\in\lambda,|d_{ij}|<1} \frac{\alpha_i+d_{ij}}{\beta+d_{ij}} ,
\qquad
P_\pm = \prod_{(i,j)\in\lambda, \, \pm d_{ij}\geq 1} \frac{\alpha_i+d_{ij}}{\beta+d_{ij}} .
\ee

Consider first the product~$P_0$.
Since every numerator in~$P_0$ is bounded by $(1-b^{-2})|\lambda|$ and every denominator is at least $|\beta|>0$ (in modulus), each ratio is $O(|\lambda|)$.  On the other hand, the number of factors in~$P_0$ can be bounded by noting that the boxes included there are those for which $|E(0,\lambda,\mu;i,j)| < |\eps_1|$.  By Lemma~\ref{lem:bound-density} there are thus $O(\sqrt{|\lambda|})$ factors.  Overall, we get the sub-exponential bound, for some constant $C>0$,
\bel{P0-bound}
|P_0| \leq e^{C\sqrt{|\lambda|} (1+\log|\lambda|)} .
\ee

We split the factors $P_\pm$ into contributions $P_\pm^{(i)}$ from the $i$-th row of~$\lambda$.  For each fixed~$i$, the quantity $d_{ij}$ is strictly increasing in~$j$, so the products $P_-^{(i)}$ and $P_+^{(i)}$ receive contributions from boxes $(i,j)$ with consecutive values $j\in [1,j_-]$ and $j\in [j_++1,\lambda'_i]$, respectively, with $0\leq j_-\leq j_+\leq\lambda'_i$.  Either one (or both) intervals may be empty (as implemented by $j_-=0$ or $j_+=\lambda'_i$, respectively).

Consider $P_-^{(i)}$.  We recall that $d_{ij}\leq -1$ and $\alpha_i+d_{ij}\leq 0$.  Each factor $(\alpha_i+d_{ij})/(\beta+d_{ij})$ is bounded as
\bel{Pm-bound-each-factor}
\Bigl| \frac{\alpha_i+d_{ij}}{\beta+d_{ij}} \Bigr|
= \frac{-\alpha_i-d_{ij}}{(|\beta|^2 + d_{ij}^2)^{1/2}}
\leq \frac{-\alpha_i-d_{ij}}{-d_{ij}} .
\ee
For rows with $\alpha_i<0$, the upper bound here is a decreasing function of $-d_{ij}\geq 1$.  On the other hand we know that $d_{i(j+1)}-d_{ij} = 1 + |b^{-2}|(\lambda_j-\lambda_{j+1}) \geq 1$, which is a lower bound on how spread out the $d_{ij}$ for $j\in[1,j_-]$ must be.  We learn that for $\alpha_i<0$,
\bel{Pmi-bound}
|P_-^{(i)}| = \prod_{j=1}^{j_-} \Bigl| \frac{\alpha_i+d_{ij}}{\beta+d_{ij}} \Bigr|
\leq \prod_{j=1}^{j_-} \frac{|\alpha_i|+|d_{ij}|}{|d_{ij}|}
\leq \prod_{d=1}^{j_-} \frac{|\alpha_i|+d}{d} = \binom{j_-+|\alpha_i|}{|\alpha_i|} .
\ee
For rows with $\alpha_i\geq 0$, the upper bound in~\eqref{Pm-bound-each-factor} is in $[0,1]$, so
\bel{Pmleq1}
|P_-^{(i)}| \leq 1 , \qquad \text{for } \alpha_i \geq 0 .
\ee

Consider $P_+^{(i)}$ next.  The product ranges over $j\in[j_++1,\lambda'_i]$ such that $d_{ij}\geq 1$.  As before, $d_{ij}$ for successive~$j$ differ by at least~$1$.
On the other hand, $\alpha_i+d_{ij}\leq 0$ so $\alpha_i<0$ and $|\alpha_i|$ bounds every $d_{ij}$ for $j$ in the interval considered here.  Thus, because $(|\alpha_i|-d)/d$ is monotonically decreasing in~$d$, we get
\be
|P_+^{(i)}|
\leq \prod_{j=j_++1}^{\lambda'_i} \frac{|\alpha_i|-d_{ij}}{d_{ij}}
\leq \prod_{d=1}^{\lambda'_i - j_+} \frac{|\alpha_i|-d}{d}
= \binom{|\alpha_i|-1}{\lambda'_i - j_+} .
\ee

The product $P_-^{(i)}P_+^{(i)}$ is then bounded as follows.
If $\alpha_i \geq 0$ then $P_+^{(i)}$ is an empty product and $|P_-^{(i)}|\leq 1$ as per~\eqref{Pmleq1}.
If $\alpha_i < 0$ (namely $\mu'_i < \lambda'_i$) then
\be
|P_-^{(i)}P_+^{(i)}|
\leq \binom{j_-+|\alpha_i|}{|\alpha_i|} \binom{|\alpha_i|-1}{\lambda'_i - j_+}
\leq 2^{j_- + 2|\alpha_i| - 1} \leq 2^{3\lambda'_i}
\ee
since $[1,j_-]$ is contained in the $i$-th row and $|\alpha_i| = \lambda'_i-\mu'_i\leq\lambda'_i$.
This is the same bound as in~\cite{2212.06741}, but for all $b^2<0$.

By taking the product over~$i$ and accounting for the bound~\eqref{P0-bound} on~$P_0$, we deduce the bound announced in the lemma.  Of course, the constant $C>0$ may differ from that in~\eqref{P0-bound} due to the sub-exponential factors caused by shifting~$A,B$ at the outset.
\end{proof}

\begin{proposition}
\label{prop:SQCD-off}
For $b^2\in(-\infty,0)$, the vector multiplet contribution is bounded by its diagonal part, up to exponential factors: there exists $C>0$ such that (denoting by $n=|Y|$ the instanton number)
\be
|\zvec(a,Y)|
\leq 8^{2(N-1)n}e^{C\sqrt{n}(1+\log n)} \prod_{\alpha=1}^{N} \prod_{(i,j)\in Y_\alpha} \frac{1}{|E(0, Y_\alpha, Y_\alpha; i,j)|^{2N}} .
\ee
\end{proposition}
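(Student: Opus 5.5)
The plan is to expand $|\zvec(a,Y)|=|\zadj(a,Y;0)|^{-1}$ and regroup its factors by diagram. Each group will be compared with the diagonal factor $E(0,Y_\alpha,Y_\alpha;i,j)$ using Lemma~\ref{lem:SQCD-off-diag} for off-diagonal pairs and Proposition~\ref{prop:irrelevant-masses} for the remaining diagonal factor.

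\emph{Regrouping.} By definition,
\be
\zadj(a,Y;0)=\prod_{\alpha,\beta=1}^N\Bigl(\prod_{(i,j)\in Y_\alpha}E(a_\alpha-a_\beta,Y_\alpha,Y_\beta;i,j)\prod_{(i,j)\in Y_\beta}\bigl(-E(a_\beta-a_\alpha-\eps_1-\eps_2,Y_\beta,Y_\alpha;i,j)\bigr)\Bigr).
\ee
In the second product I exchange the names $\alpha\leftrightarrow\beta$. Then every factor becomes a product over boxes of $Y_\alpha$, and
\be
|\zvec(a,Y)|=\prod_{\alpha=1}^N\prod_{\beta=1}^N\prod_{(i,j)\in Y_\alpha}\frac{1}{|E(a_\alpha-a_\beta,Y_\alpha,Y_\beta;i,j)|\,|E(a_\alpha-a_\beta-\eps_1-\eps_2,Y_\alpha,Y_\beta;i,j)|}.
\ee
So for each $\alpha$ there are $2N$ products over the boxes of $\lambda=Y_\alpha$: two diagonal ones ($\beta=\alpha$) and $2(N-1)$ off-diagonal ones. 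The target is to replace each of these by $\prod_{(i,j)\in\lambda}|E(0,\lambda,\lambda;i,j)|^{-1}$, at a cost of $8^{|\lambda|}$ per off-diagonal product and a subexponential factor per diagonal product.

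\emph{Off-diagonal products ($\beta\neq\alpha$).} I apply Lemma~\ref{lem:SQCD-off-diag} with $A=0$, $\mu=Y_\beta$ and $B$ equal to either $a_\alpha-a_\beta$ or $a_\alpha-a_\beta-\eps_1-\eps_2$. The genericity assumption of Theorem~\ref{thm:SQCD} says $a_\alpha-a_\beta\notin\overline{\Lambda(\eps_1,\eps_2)}$. This closure is invariant under shifts by $\eps_1+\eps_2$, so both choices of $B$ satisfy the hypothesis of the lemma. The lemma gives
\be
\prod_{(i,j)\in\lambda}|E(B,\lambda,\mu;i,j)|^{-1}\leq 8^{|\lambda|}e^{C|\lambda|^{1/2}(1+\log|\lambda|)}\prod_{(i,j)\in\lambda}|E(0,\lambda,\lambda;i,j)|^{-1}.
\ee
Note that the lemma bounds the product of ratios $E(0,\lambda,\lambda)/E(B,\lambda,\mu)$, which is exactly the form needed here.

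\emph{Diagonal products ($\beta=\alpha$).} The factor with $a_\alpha-a_\beta=0$ is already the target. For the other factor, $E(-\eps_1-\eps_2,\lambda,\lambda;i,j)$, I use the special case $\mu=\lambda$ of Proposition~\ref{prop:irrelevant-masses} with $A=0$ and $B=-\eps_1-\eps_2$. This requires $B\notin\ZZ_{\geq0}\eps_1+\ZZ_{<0}\eps_2$. That holds because $b^2<0$ is real, so $\eps_1,\eps_2$ are $\RR$-independent and the only representation of $B$ has coefficient $-1$ on $\eps_1$. The proposition then bounds $\prod_{(i,j)\in\lambda}|E(0,\lambda,\lambda;i,j)/E(-\eps_1-\eps_2,\lambda,\lambda;i,j)|$ by $e^{C|\lambda|^{1/2}(1+\log|\lambda|)}$.

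\emph{Collecting the bounds.} Multiplying over $\alpha$ and $\beta$ gives the claimed product $\prod_\alpha\prod_{(i,j)\in Y_\alpha}|E(0,Y_\alpha,Y_\alpha;i,j)|^{-2N}$. The exponential factors give $\prod_\alpha 8^{2(N-1)|Y_\alpha|}=8^{2(N-1)n}$. There are finitely many ($\leq 2N^2$) subexponential factors, each bounded by $e^{C\sqrt n(1+\log n)}$ since $|Y_\alpha|\leq n$. Empty diagrams contribute trivially. After enlarging $C$, their product is $e^{C\sqrt n(1+\log n)}$.

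The only delicate point I expect is the bookkeeping in the regrouping step: the second half of each $(\alpha,\beta)$ term must be correctly reassigned to diagram $Y_\beta$ with the shifted argument $-\eps_1-\eps_2$. One must then check that both shifted arguments still avoid the relevant lattices: $\overline{\Lambda(\eps_1,\eps_2)}$ off the diagonal, and $\ZZ_{\geq0}\eps_1+\ZZ_{<0}\eps_2$ on the diagonal. The analytic content is entirely contained in Lemma~\ref{lem:SQCD-off-diag} and Proposition~\ref{prop:irrelevant-masses}.
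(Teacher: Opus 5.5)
Your proof follows the paper's argument. You regroup $\zadj(a,Y;0)$ by diagram, then apply Lemma~\ref{lem:SQCD-off-diag} with $A=0$ to the $2(N-1)$ off-diagonal products. There you correctly note that $\overline{\Lambda(\eps_1,\eps_2)}$ is invariant under shifts by $\eps_1+\eps_2$. Finally you apply the $\mu=\lambda$ case of Proposition~\ref{prop:irrelevant-masses} to the $E(-\eps_1-\eps_2,Y_\alpha,Y_\alpha;i,j)$ factors. The regrouping, the direction of each inequality and the accounting giving $8^{2(N-1)n}e^{C\sqrt n(1+\log n)}$ are all right.

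One step is justified incorrectly: the check that $B=-\eps_1-\eps_2\notin\ZZ_{\geq0}\eps_1+\ZZ_{<0}\eps_2$. For real $b^2$, $\eps_1=b^2\eps_2$ is a real multiple of $\eps_2$, so $\eps_1,\eps_2$ are $\RR$-dependent, not $\RR$-independent. Your argument is the one that works for non-real $b^2$. Moreover, for rational $b^2$ the representation of $B$ as $k\eps_1+l\eps_2$ is not unique:
\begin{itemize}
\item at $b^2=-1$ one has $B=0=k\eps_1+k\eps_2$ for every $k$;
\item at $b^2=-1/2$ one has $B=\eps_1$, whose $\eps_1$-coefficient is $+1$.
\end{itemize}
The conclusion is still true, but it follows from a sign argument. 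Divide by $\eps_2$. Elements of $\ZZ_{\geq0}\eps_1+\ZZ_{<0}\eps_2$ become $kb^2+l\leq l\leq -1$, since $k\geq0$ and $b^2<0$. By contrast, $B/\eps_2=-b^2-1=|b^2|-1>-1$. With this replacement your proof is complete.
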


\begin{proof}
This is an immediate application of Lemma~\ref{lem:SQCD-off-diag} to bound each off-diagonal contribution $\prod_{(i,j)\in Y_\alpha}1/E(\dots)$ by the same product with $a_\alpha-a_\beta$ or $a_\alpha-a_\beta-\eps_1-\eps_2$ shifted to zero and $Y_\beta$ replaced by~$Y_\alpha$, and of Proposition~\ref{prop:irrelevant-masses} to half of the diagonal factors to shift $-\eps_1-\eps_2$ to~$0$ in $E(-\eps_1-\eps_2,Y_\alpha,Y_\alpha;i,j)$.
\end{proof}

\subsection{Fundamental matter: a bound on contents/hooks}

To conclude our study of SQCD with $b^2=\eps_1/\eps_2<0$, there remains to show that the contribution of fundamental matter, which involves the ``contents'' $(j-1)\eps_1+(i-1)\eps_2$ of boxes, is controlled by the diagonal vector multiplet contribution, which involves the ``hook'' $(\lambda'_i-j)\eps_1+(\lambda_j-i)\eps_2$.

\begin{lemma}\label{SQCD-bound-fun}
  For $b^2\in(-\infty,0)$ and $A\in\CC$ there exists a constant $C>0$ such that for any partition~$\lambda$ one has
  \bel{SQCD-emp-lambda}
  \prod_{(i,j)\in\lambda} \Bigl| \frac{E(A, \lambda, \emptyset; i,j)}{E(0, \lambda, \lambda; i,j)} \Bigr|
  \leq e^{C\sqrt{|\lambda|}(1+\log|\lambda|)} .
  \ee
\end{lemma}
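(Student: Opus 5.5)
The plan is to reduce to a real comparison between ``contents'' and ``hooks'', and then pair numerator and denominator factors row by row and column by column so that almost every ratio is at most~$1$. The boxes where this pairing fails should be controlled by Lemma~\ref{lem:bound-density}.

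\emph{Normalization.} Write $\eps_1=-t\eps_2$ with $t=-b^2>0$ and divide all factors by~$\eps_2$. Relabel $(i,j)\to(1+\lambda_j-i,j)$ as in~\eqref{zfun}. Each numerator then becomes $A/\eps_2+i-tj$, while each denominator is the positive real hook
\[
h_{ij} = (\lambda_j-i+1)+t(\lambda'_i-j) .
\]
The hook satisfies both $h_{ij}\geq\lambda_j-i+1$ and $h_{ij}\geq 1+t(\lambda'_i-j)$. I would not shift $A$ to zero via Proposition~\ref{prop:irrelevant-masses}, since zero may lie in the forbidden closure for irrational~$b^2$. Instead I use the crude bound
\[
|A/\eps_2+i-tj|\leq |c_{ij}|+|A/\eps_2|, \qquad c_{ij}=i-tj\in\RR .
\]
Boxes with $|c_{ij}|<2$ are $O(\sqrt{|\lambda|})$ in number by (the $\mu=\emptyset$ variant of) the density argument in Lemma~\ref{lem:bound-density} and Proposition~\ref{prop:irrelevant-masses}. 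For these boxes each ratio is $O(|\lambda|)/1$, so they contribute only $e^{O(\sqrt{|\lambda|}\log|\lambda|)}$. For every other box, $(|c|+|A/\eps_2|)/|c|\leq\exp(|A/\eps_2|/|c|)$. The same ordering-by-size argument used in the proof of Proposition~\ref{prop:irrelevant-masses} (at most $O(\sqrt{|\lambda|})$ contents per unit interval, then a harmonic sum) bounds $\sum 1/|c_{ij}|$ by $O(\sqrt{|\lambda|}\log|\lambda|)$. It therefore suffices to show $\prod |c_{ij}|/h_{ij}\leq 1$ over the remaining boxes, up to discarding $O(\sqrt{|\lambda|})$ more factors, which is harmless by the reasoning of Proposition~\ref{prop:removing-factors}.

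\emph{Pairing.} Split the boxes into the region $i>tj$ and the region $i<tj$.

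In row~$j$, the region $i>tj$ consists of the consecutive indices $i\in[i_0,\lambda_j]$ with $i_0=\ceil{tj}$ (or $\ceil{tj}+1$ when $tj$ is an integer). Pair the box $i$ with the box $i^*=\lambda_j+i_0-i$ in the same row. The map $i\mapsto i^*$ is a bijection of this interval, and it gives
\[
h_{i^*j}\geq \lambda_j-i^*+1 = i-i_0+1 > i-tj = c_{ij} .
\]
Hence the product of $|c_{ij}|/h_{ij}$ over this row segment is at most~$1$.

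The region $i<tj$ is handled symmetrically in each column~$i$. There the indices are $j\in(i/t,\lambda'_i]$, the relevant content is $|c_{ij}|=tj-i$, and the hook bound used is $h_{ij}\geq 1+t(\lambda'_i-j)$. The reversed pairing $j\mapsto\lambda'_i+j_0-j$ gives
\[
h\geq 1+t(j-j_0) ,
\]
which should be compared with $tj-i\leq t(j-j_0)+t$.

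\emph{Main obstacle.} For $t>1$, the column pairing does not give ratio $\leq1$ directly: the loss is a factor $\bigl(t(j-j_0)+t\bigr)/\bigl(t(j-j_0)+1\bigr)$, which is close to~$1$ only once $j-j_0$ is large. The products of these losses must be shown to be subexponential. I would first try to exchange the roles by symmetry, transposing the diagram and replacing $t$ by $1/t$, so that in each region the pairing direction with unit spacing matches the hook component with coefficient~$1$. If a residual loss remains, it is a product of factors $1+O(1/m)$ over the pairing offset~$m$, and I would bound it by $\exp(O(\sqrt{|\lambda|}\log|\lambda|))$ by the same harmonic-sum and density argument as above. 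Getting this bookkeeping to be uniform in the shape of~$\lambda$ is the step I expect to need the most care.
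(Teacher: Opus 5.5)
Your strategy is the paper's: pair each content $i-tj$ with a hook by reversing every row segment where $i>tj$ (using $h\ge\lambda_j-i+1$) and every column segment where $i<tj$ (using $h\ge 1+t(\lambda'_i-j)$). The $O(\sqrt{|\lambda|})$ boxes with bounded content are handled by Lemma~\ref{lem:bound-density}. Treating $A$ through $|c|+|A/\eps_2|$ plus a harmonic sum, instead of shifting to $A=\sqrt{-1}\eps_2$ via Proposition~\ref{prop:irrelevant-masses}, is equivalent: it is that proposition's proof inlined.

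Your ``main obstacle'' for $t>1$ comes only from choosing disjoint regions with the ceiling offset $j_0$. In the column region the content and the paired hook component both grow by exactly $t$ per step in~$j$. The only mismatch is the additive offset $tj_0-i\in(0,t]$ against the hook's $+1$, not a slope mismatch. The paper instead takes $j_-=\floor{i/t}$ and $i_+=\floor{tj}$. Then $tj_-\le i$, and the reversed pairing gives $h\ge 1+t(j-j_-)\ge 1+tj-i$ with no loss. The price is that the two regions overlap, but only on boxes with $i-tj\in(-1,t)$, of which there are $O(\sqrt{|\lambda|})$.

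Your fallback also closes the gap as stated. Write $m=j-j_0\ge1$. Then $|c_{ij}|\le t(m+1)\le t(tm+1)$ and $|c_{ij}|>t$. Hence the loss factor is at most $1+\frac{t-1}{tm+1}\le\exp\bigl(t(t-1)/|c_{ij}|\bigr)$, and the density/harmonic-sum bound gives $e^{O(\sqrt{|\lambda|}\log|\lambda|)}$. The $m=0$ boxes have $|c_{ij}|\le t$, so there are $O(\sqrt{|\lambda|})$ of them, each costing at most~$t$.

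Be careful with the transposition idea: it is not an exact symmetry of the lemma. Transposing $\lambda$ and swapping $\eps_1\leftrightarrow\eps_2$ turns the denominator $E(0,\lambda,\lambda)$ into $-E(-\eps_1-\eps_2,\lambda,\lambda)$, i.e.\ hook $(\lambda_j-i)+t(\lambda'_i-j+1)$. So on its own it maps the bad case $t>1$ to an equally bad case, with the loss moved to the row pairing. What actually helps is the $\mu=\lambda$ case of Proposition~\ref{prop:irrelevant-masses}. It lets you replace $E(0,\lambda,\lambda)$ by $E(-\eps_1-\eps_2,\lambda,\lambda)$ at subexponential cost, and for that denominator your disjoint-region pairing is lossless whenever $t\ge1$.
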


\begin{proof}
  By Proposition~\ref{prop:irrelevant-masses}, it is enough to prove this bound for a specific $A\not\in\overline{\ZZ_{<0}\eps_1+\ZZ_{<0}\eps_2}$ which we select to be $A=\sqrt{-1}\eps_2$.
  It is convenient to divide the numerators and denominators by~$\eps_2$.  Then we reorganize the numerator by swapping $(i,j)\to (1+\lambda_j-i,j)$ and we bound it using the triangle inequality,
  \beal
  \prod_{(i,j)\in\lambda} \frac{1}{\eps_2} |E(A, \lambda, \emptyset; i,j)|
  & = \prod_{(i,j)\in\lambda} \bigl|A/\eps_2 + b^2 j + i\bigr|
  \\
  & \leq \prod_{(i,j)\in\lambda} \bigl( 1 + |b^2 j + i| \bigr) .
  \ee
  We shall show that, apart from $O(\sqrt{|\lambda|})$ exceptions, each factor here is bounded by a corresponding factor in the denominator, namely that
  \bel{sqcd-fun-ibbj}
  1 + |b^2 j + i|
  \leq
  - b^2 (\lambda'_{\iti} - \jt) + \lambda_{\jt} + 1 - \iti
  \ee
  for suitable $(\iti,\jt)$, which depends almost bijectively on $(i,j)$.  The relatively few failures of bijectivity are treated afterwards.

  To achieve~\eqref{sqcd-fun-ibbj}, we define two regions $\lambda^{\pm}$ that roughly partition~$\lambda$ according to the sign $b^2 j + i\gtrless 0$ (up to some exceptional boxes).
  The region~$\lambda^+$ is defined by $i\geq i_+ \coloneqq \floor{|b^2|j}$, and we take $(\iti,\jt) = (\lambda_j+i_+-i,j)$, which defines a bijection of~$\lambda^+$.  One checks that
  \be
  - b^2 (\lambda'_{\iti} - \jt) + \lambda_{\jt} + 1 - \iti
  \geq \lambda_j + 1 - \iti
  = i + 1 - i_+
  \geq i + 1 - |b^2| j
  \ee
  as desired.
  The region~$\lambda^-$ is defined by $j\geq j_- \coloneqq \floor{|b^{-2}|i}$ and we take $(\iti,\jt) = (i,\lambda'_i+j_--j)$, which defines a bijection of~$\lambda^-$.  One checks that, as desired,
  \be
  - b^2 (\lambda'_{\iti} - \jt) + \lambda_{\jt} + 1 - \iti
  \geq |b^2| (j-j_-) + 1
  \geq |b^2| j - i + 1 .
  \ee

  The two regions cover~$\lambda$ since for any box $i-|b^2|j$ is either non-negative, which implies $(i,j)\in\lambda^+$, or non-positive, which implies $(i,j)\in\lambda^-$, or both.  Due to the gap between $|b^2|j$ and its floor~$i_+$ and the gap between $|b^{-2}|i$ and its floor~$j_+$ the intersection $\lambda^+\cap\lambda^-$ is generally not empty.  It consists of some of the boxes for which $i-|b^2|j\in(-1,|b^2|)$.
  As explained below~\eqref{eq9olambda} there are $O(\sqrt{|\lambda|})$ boxes with $j\eps_1+i\eps_2$ in any given bounded region.  Thus, at most $O(\sqrt{|\lambda|})$ factors were double counted in our pairing argument above, which leads to the sub-exponential bound stated in the Lemma.
\end{proof}

We are reaching the end of the story for SQCD\@.
Recall from~\eqref{zfun} the contribution of $N_f$~fundamental hypermultiplets to the $U(N)$ instanton partition function:
\be
\zfun(Y)
= \prod_{\alpha=1}^N \prod_{f=1}^{N_f} \prod_{(i,j)\in Y_\alpha} E(a_\alpha - m_f , Y_\alpha, \emptyset; i,j) .
\ee
By Lemma~\ref{SQCD-bound-fun}, for some $C>0$ (rescaled by~$N_f$ compared to the lemma),
\be
|\zfun(Y)|
\leq \prod_{\alpha=1}^N \Bigl( e^{C\sqrt{|Y_\alpha|}(1+\log|Y_\alpha|)} \prod_{(i,j)\in Y_\alpha} |E(0, Y_\alpha, Y_\alpha; i,j)|^{N_f} \Bigr) ,
\ee
Combining this with Proposition~\ref{prop:SQCD-off} gives (with a different $C>0$)
\be
\bigl|\zfun(Y)\zvec(Y)\bigr|
\leq 8^{2(N-1)n} e^{C\sqrt{n}(1+\log n)} \prod_{\alpha=1}^N \prod_{(i,j)\in Y_\alpha} |E(0, Y_\alpha, Y_\alpha; i,j)|^{N_f-2N}
\ee
where $n=|Y|$.

For $N_f=2N$ this simplifies drastically and gives an exponential upper bound, which proves absolute convergence with radius
\be
R_N = 8^{-2(N-1)} > 0 .
\ee
Indeed,
\beal
|\Zinst|
& \leq \sum_Y |q|^{|Y|} |Z_Y(m,a)|
= \sum_Y |q|^{|Y|} \bigl|\zvec(a,Y) \zfun(a,Y;m)\bigr|
\\
& \leq \sum_n (|q|/R_N)^n e^{C\sqrt{n}(1+\log n)} p(n) .
\ee
Here, $p(n)$ is the number of integer partitions of~$n$, which is upper bounded by $\exp(O(\sqrt{n}))$, so the upper bound converges precisely for $|q|<R_N$.

For $N_f<2N$, the negative power of $|E(\dots)|$ makes the analogous bound converge for all~$q$.  Indeed, by Lemma~\ref{lem:bound-density}, all boxes have
\be
|E(0, Y_\alpha, Y_\alpha; i,j)| \geq \rho \coloneqq (2|q|/R_N)^{1/(2N-N_f)}
\ee
except for at most $C'\sqrt{|\lambda|}$ boxes for some $C'>0$.  The latter boxes are still bounded below since
\be
E(0, Y_\alpha, Y_\alpha; i,j)/\eps_2 = |b^2|((Y_\alpha')_i-j) + Y_{\alpha j}-i + 1 \geq 1.
\ee
Thus,
\beal
|\Zinst|
& \leq \sum_Y (|q|/R_N)^{|Y|} e^{C\sqrt{|Y|}(1+\log|Y|)} \prod_{\alpha=1}^N \prod_{(i,j)\in Y_\alpha} |E(0, Y_\alpha, Y_\alpha; i,j)|^{N_f-2N}
\\
& \leq \sum_Y (|q|/R_N)^{|Y|} e^{C\sqrt{|Y|}(1+\log|Y|)} \rho^{(N_f-2N)|Y|} |\eps_2/\rho|^{C'\sqrt{|Y|}}
\\
& \leq \sum_n 2^{-n} e^{C\sqrt{n}(1+\log n)} |\eps_2/\rho|^{C'\sqrt{n}} p(n) < +\infty .
\ee
This concludes the proof of Theorem~\ref{thm:SQCD} in the case $b^2\in(-\infty,0)$.

\section{Quivers with non-real \(b^2\)}
\label{sec:quivers}

Throughout this section we assume $b^2=\eps_1/\eps_2\in\CC\setminus\RR$.  The lattice $\Lambda(\eps_1,\eps_2) = \ZZ\eps_1+\ZZ\eps_2$ is then discrete.
A useful equivalence of norms is that there exists $C_{b^2}>1$ such that for any $k,l\in\ZZ$,
\bel{equiv-norm}
\frac{|\eps_2|}{C_{b^2}} \bigl(|k| + |l|\bigr) \leq |k\eps_1 + l\eps_2| \leq C_{b^2} |\eps_2| \bigl(|k|+|l|\bigr) .
\ee
This allows us to replace (up to constant loss of radius) all $E(\dots)$ factors by simpler integer factors.

\subsection{Fundamental matter: a bound on cohooks/hooks}

Our goal here is to bound the fundamental hypermultiplet contribution~$\zfun$ by (inverses of) diagonal vector multiplet contributions.
The main step is a comparison on hook and cohook products in Lemma~\ref{lem:cohook-hook}, which is essentially in the literature.  Then we deduce Proposition~\ref{prop:quiver-fun}

\begin{lemma}\label{lem:cohook-hook}
  For any partition~$\lambda$ the ratio of the cohook product by the hook product is bounded as
  \beal
  1 \leq \prod_{(i,j)\in\lambda} \frac{i+j-1}{h_\lambda(i,j)} \leq 2^{|\lambda|} ,
  \ee
  where $h_\lambda(i,j) = \lambda'_i-j + \lambda_j-i + 1$.
\end{lemma}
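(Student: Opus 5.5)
The plan is to treat the two inequalities separately. For both, I would use the multiplicative structure of hook products under removal of a corner box, together with the classical identities $\prod_{(i,j)\in\lambda}(\text{arm}+1)=\prod_j\lambda_j!=\prod_{(i,j)\in\lambda} i$ and its transpose $\prod_{(i,j)\in\lambda}(\text{leg}+1)=\prod_{(i,j)\in\lambda} j$. As the paper indicates, both bounds are essentially in Morales--Pak--Panova~\cite{1610.07561}. The fallback is therefore to quote their hook/cohook comparison directly; the argument below is how I would reconstruct it.

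For the lower bound $\prod h_\lambda(i,j)\leq\prod(i+j-1)$, I would argue by induction on $|\lambda|$, removing a corner box $c=(i_0,j_0)$. On the cohook side the product loses exactly one factor, $i_0+j_0-1$. On the hook side it loses the factor $h_\lambda(c)=1$. In addition, each box strictly left of $c$ in row $j_0$ and each box strictly below $c$ in column $i_0$ (in the paper's conventions) has its hook decreased by one. So the induction step amounts to
\be
\prod_{u\in\text{arm/leg of }c}\frac{h_\lambda(u)}{h_\lambda(u)-1}\leq i_0+j_0-1 .
\ee
The hooks of the $i_0-1$ boxes in the row of $c$ are strictly increasing integers starting at $\geq2$. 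Hence their ratios multiply to at most $\prod_{k=2}^{i_0}k/(k-1)=i_0$. Likewise the column contributes at most $j_0$. This gives the bound $i_0j_0$, which is not good enough. I would refine it by choosing the corner with $i_0+j_0$ maximal, or by using the probabilistic ``hook walk'' of Greene--Nijenhuis--Wilf, where these products are transition weights summing to~$1$ over corners. Summing over corners rather than picking one should yield an inequality with the needed $i_0+j_0-1$.

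For the upper bound $\prod(i+j-1)\leq 2^{|\lambda|}\prod h_\lambda$, I would bound each cohook by a quantity whose product is controlled, namely $i+j-1\leq 2\max(i,j)-1$. I would then compare $\prod\max(i,j)$ with $\prod h$ row by row in the region $i\geq j$ and column by column in the region $j>i$, imitating the pairing argument of Lemma~\ref{SQCD-bound-fun}. Concretely, within row $j$ the boxes with $i\geq j$ carry cohook $\leq 2i$. The hook lengths in that row are distinct integers, at least $\lambda_j-i+1$ for box $(i,j)$, so reversing the order $i\mapsto\lambda_j+j-i$ on that segment pairs each $i$ with a hook $\geq i-j+1$. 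This loses too much near the diagonal. The fix I would try is to pair cohooks $i+j-1$ with hooks in the same way as the maps $\lambda^\pm$ in Lemma~\ref{SQCD-bound-fun} (there with $|b^2|=1$), ensuring each hook is at least half the paired cohook. The diagonal double-counting there cost only $O(\sqrt{|\lambda|})$ factors, but here a clean $2^{|\lambda|}$ is wanted. In the worst case a subexponential loss would still suffice for the application in Proposition~\ref{prop:removing-factors}.

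The main obstacle is the exact lower bound $\prod h\leq\prod(i+j-1)$ with constant~$1$. The naive corner induction only gives the weaker factor $i_0j_0$. I expect to need either the hook-walk averaging over all corners or a direct injective/bijective argument as in~\cite{1610.07561}. If this cannot be made to work cleanly, I would cite their result for this inequality.
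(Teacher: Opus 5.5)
Your proposal proves neither inequality, and the corner-removal induction you chose for the lower bound cannot work. Write $R_c=\prod_u h_\lambda(u)/(h_\lambda(u)-1)$, with the product over the arm and leg of the corner $c=(i_0,j_0)$; your step needs some corner with $R_c\le i_0+j_0-1$. For a $p\times q$ rectangle, the only corner has $R_c=pq$. Indeed, every standard tableau puts its largest label there, so $e(\lambda)=e(\lambda\setminus c)$, and the hook-length formula gives $R_c=|\lambda|$ (equivalently, the hook walk ends there with probability one). Since $pq>p+q-1$ once $p,q\ge 2$, neither a clever choice of corner nor averaging over corners can close the step. Concretely, $\prod(i+j-1)/\prod h_\lambda$ equals $4/3$ for $(2,1)$ and $1$ for $(2,2)$, so the ratio is not monotone under adding a box, and your estimate $R_c\le i_0j_0$ is sharp.

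What is missing is a global argument, and this is what the paper supplies. Place $\lambda$ in a $p\times q$ rectangle $\nu$ and let $\mu$ be the $180^\circ$ rotation of $\nu\setminus\lambda$, so that $\nu/\mu$ is $\lambda$ rotated by $180^\circ$. Relabelling $x\mapsto|\lambda|+1-x$ gives $e(\nu/\mu)=e(\lambda)=F(\lambda)$. Meanwhile, the hook in $\nu$ of the box of $\nu/\mu$ corresponding to $(i,j)\in\lambda$ is exactly $i+j-1$, so $F(\lambda)/F(\nu/\mu)=\prod(i+j-1)/\prod h_\lambda$. The Morales--Pak--Panova estimate $1\le e(\nu/\mu)/F(\nu/\mu)\le 2^{|\nu/\mu|}$ \cite{1610.07561} then yields both bounds at once. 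The lower bound holds because Naruse's formula writes $e(\nu/\mu)$ as a sum of positive terms over excited diagrams, and the term $D=\mu$ is $F(\nu/\mu)$. Saying you would ``quote their comparison'' does not identify this construction, which is where both the constant $1$ and the constant $2$ come from.

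The upper-bound part is likewise only a plan. A pairing in which each hook is at least half its cohook is asserted rather than constructed. The fallback to a ``subexponential loss'' neither matches the statement nor follows from your sketch: the boxes where the arm-only bound is off by more than a constant factor are not $O(\sqrt{|\lambda|})$ in number (for a $k\times k$ square they form a positive fraction).

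Note, though, that the pairing you dismissed does give an exponential bound if you compare whole segments rather than single boxes. In row $j$, the involution $i\mapsto\lambda_j+j-i$ of $\{j,\dots,\lambda_j\}$ gives $\prod_{i=j}^{\lambda_j}(i+j-1)/h_\lambda(i,j)\le\binom{\lambda_j+j-1}{2j-2}\le 2^{\lambda_j+j-1}$. Symmetrically, column $i$ restricted to $j>i$ gives at most $\binom{\lambda'_i+i-1}{2i-1}$. Summing exponents over the rows and columns meeting the Durfee square of size $d$ gives at most $|\lambda|+2d^2\le 3|\lambda|$, i.e.\ $\prod(i+j-1)\le 8^{|\lambda|}\prod h_\lambda$. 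That elementary bound would suffice for Proposition~\ref{prop:quiver-fun} (only $C_1$ changes). However, it is weaker than the stated $2^{|\lambda|}$ and gives nothing for the lower bound.
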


\begin{proof}
  This is an easy corollary of powerful asymptotic results on standard Young tableaux by Morales, Pak, and Panova.
  Based on Naruse's hook-length formula for the number of standard Young tableaux $e(\nu/\mu)$ of a given skew shape $\nu/\mu$ (difference of two Young diagrams $\mu\subseteq\nu$), that they prove, these authors bound that number as \cite[Theorem 3.3, Lemma 4.3]{1610.07561}
  \bel{eq43}
  1 \leq \frac{e(\nu/\mu)}{F(\nu/\mu)} \leq 2^{|\nu/\mu|} , \qquad
  F(\nu/\mu) \coloneqq \frac{|\nu/\mu|!}{\prod_{(i,j)\in\nu/\mu} h_\nu(i,j)} .
  \ee
  The product $F(\nu/\mu)$ reduces for $\mu=\emptyset$ to the usual hook-length formula for the dimension of the symmetric group representation labeled by~$\nu$.

  Following \cite[Section 12.1]{1610.07561}, we now take $\nu$ to be a $p\times q$ rectangle containing the partition $\lambda=(\lambda_1,\dots,\lambda_\ell)$, namely $p\geq\lambda_1$ and $q\geq\ell=\lambda'_1$, and take $\mu$ to be a $180$-degree rotation of $\nu\setminus\lambda$, namely $\mu=(p-\lambda_q,p-\lambda_{q-1},\dots,p-\lambda_1)$, so that $\nu/\mu$ is itself a $180$-degree rotation of~$\lambda$.
  The standard Young tableaux with shape $\lambda$ (labelings of boxes from $1$ to $|\lambda|$ that are increasing in every row and column) are in bijection with those with skew shape $\nu/\mu$ (simply apply $x\mapsto |\lambda|+1-x$ to all labels).
  Thus, $e(\nu/\mu)=e(\lambda)$.  This is in turn equal to $F(\lambda)$ by the usual hook-length formula.  We deduce
  \be
  1 \leq \frac{e(\nu/\mu)}{F(\nu/\mu)} = \frac{F(\lambda)}{F(\nu/\mu)} = \frac{\prod_{(i,j)\in\nu/\mu} h_\nu(i,j)}{\prod_{(i,j)\in\lambda} h_\lambda(i,j)}
  \leq 2^{|\lambda|} ,
  \ee
  which is the desired bound since $h_\nu(i,j)=(p+1-i)+(q+1-j)-1$ and boxes in $\nu/\mu$ and in $\lambda$ are in bijection under $(i,j)\mapsto(p+1-i,q+1-i)$.
\end{proof}

\begin{remark}
  (a) The lower bound, noticed already in \cite{1610.07561}, states that the product of hook lengths is less than that of cohook lengths $(i+j-1)$.
  In fact, the multiset of hook lengths majorizes that of cohook lengths~\cite{1903.11828} (same total sum, and greater or equal sums of the largest $p$ elements).

  (b) Numerics suggest that the constant $2$ in the upper bound is far from optimal, but the limit shape that extremizes the ratio solves a complicated integral equation that does not readily have an analytic solution.
\end{remark}

\begin{proposition}\label{prop:quiver-fun}
  Consider parameters $\eps_1,\eps_2\in\CC\setminus\{0\}$ with $b^2=\eps_1/\eps_2\in\CC\setminus\RR$, and consider $A\in\CC$.
  Then there exists constants $C_1,C_2>0$ with $C_1$ depending only on~$b^2$ and $C_2$~depending on $\eps_1,\eps_2,A$, such that for any partition~$\lambda$,
  \bel{complex-b2-fun-bound}
  \prod_{(i,j)\in\lambda} \Bigl| \frac{E(A, \lambda, \emptyset; i,j)}{E(0, \lambda, \lambda; i,j)} \Bigr|
  \leq e^{C_1|\lambda|} e^{C_2\sqrt{|\lambda|}(1+\log|\lambda|)} .
  \ee
  In particular, the contribution of the $M_I$ fundamental hypermultiplets of $U(N_I)$ in a term $(Y_1,\dots,Y_K)$ of the instanton partition function in \autoref{subsec:Zinst} is bounded as follows, with $n_I=|Y_I|$ the $U(N_I)$ instanton number:
  \bel{complex-b2-zfun}
  |\zfun(Y_I)|
  \leq e^{C_1M_In_I} e^{C_2M_IN_I\sqrt{n_I}(1+\log n_I)} \prod_{\alpha=1}^{N_I} \prod_{(i,j)\in Y_{I\alpha}} |E(0, Y_{I\alpha}, Y_{I\alpha}; i,j)|^{M_I} .
  \ee
\end{proposition}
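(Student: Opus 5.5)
First use Proposition~\ref{prop:irrelevant-masses} to replace $A$ by a convenient value. Its special case $\mu=\emptyset$ allows any $b^2\neq0$. Since the lattice $\Lambda(\eps_1,\eps_2)$ is discrete for $b^2\in\CC\setminus\RR$, we may pick $A=0$: it is not in $\ZZ_{<0}\eps_1+\ZZ_{<0}\eps_2$, and the closure of that set changes nothing. The ratio of the products with $A$ and with $0$ is then bounded by $e^{C_2\sqrt{|\lambda|}(1+\log|\lambda|)}$, where $C_2$ depends on $\eps_1,\eps_2,A$. This is the only place the mass-type parameter enters, so $C_2$ carries all of that dependence, as the statement requires.

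With $A=0$, relabel the numerator by $(i,j)\to(1+\lambda_j-i,j)$, as in~\eqref{eq9olambda}. The numerator becomes $\prod_{(i,j)\in\lambda}|j\eps_1+i\eps_2|$. The denominator $E(0,\lambda,\lambda;i,j)=-(\lambda'_i-j)\eps_1+(\lambda_j-i+1)\eps_2$ has integer coefficients whose absolute values sum to the hook length $h_\lambda(i,j)$. Apply the norm equivalence~\eqref{equiv-norm} factor by factor: $|j\eps_1+i\eps_2|\leq C_{b^2}|\eps_2|(i+j)$ and $|E(0,\lambda,\lambda;i,j)|\geq |\eps_2|h_\lambda(i,j)/C_{b^2}$. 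The powers of $|\eps_2|$ cancel, so the ratio is at most $C_{b^2}^{2|\lambda|}\prod(i+j)/h_\lambda(i,j)$.

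Next, $i+j\leq 2(i+j-1)$ for $i,j\geq1$. Lemma~\ref{lem:cohook-hook} then bounds $\prod (i+j-1)/h_\lambda(i,j)\leq 2^{|\lambda|}$. Altogether the ratio is at most $(4C_{b^2}^2)^{|\lambda|}$, so $C_1=\log(4C_{b^2}^2)$ depends only on $b^2$. This step is where the non-reality of $b^2$ is essential, since~\eqref{equiv-norm} fails for real $b^2$. Beyond that it is routine.

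For~\eqref{complex-b2-zfun}, write $\zfun(Y_I)$ as the product over $\alpha\leq N_I$ and $\beta\leq M_I$ of $\prod_{(i,j)\in Y_{I\alpha}}E(a_{I\alpha}-\mfun_{I\beta},Y_{I\alpha},\emptyset;i,j)$. Apply~\eqref{complex-b2-fun-bound} to each pair $(\alpha,\beta)$ with $A=a_{I\alpha}-\mfun_{I\beta}$, taking $C_2$ as the maximum over these finitely many shifts. Multiply the results, using $\sum_\alpha|Y_{I\alpha}|=n_I$ for the exponential factor. For the subexponential factor, bound each $\sqrt{|Y_{I\alpha}|}(1+\log|Y_{I\alpha}|)$ by $\sqrt{n_I}(1+\log n_I)$; this gives the factor $M_IN_I$. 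I expect no real obstacle. The only point needing care is the uniform constant $C_{b^2}$ in~\eqref{equiv-norm}, which is stated earlier, and checking that the cohook relabeling matches the Lemma's convention $i+j-1$ up to the harmless factor~$2$.
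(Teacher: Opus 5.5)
Your proof is correct and follows the paper's route: you reduce to a single reference shift via the $\mu=\emptyset$ case of Proposition~\ref{prop:irrelevant-masses}, relabel the numerator, apply \eqref{equiv-norm} factor by factor, and invoke Lemma~\ref{lem:cohook-hook}. The only difference is the reference shift. The paper takes $A=-\eps_2$, so the numerator becomes exactly $\prod(\eps_1 j+\eps_2(i-1))$, matching the cohook $i+j-1$ and giving $(2C_{b^2}^2)^{|\lambda|}$; your choice $A=0$ costs a harmless extra $2^{|\lambda|}$ through $i+j\leq 2(i+j-1)$.
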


\begin{proof}
  Thanks to Proposition~\ref{prop:irrelevant-masses}, we only need to prove~\eqref{complex-b2-fun-bound} for $A=-\eps_2$, which does not lie in the closed set $\ZZ_{<0}\eps_1+\ZZ_{<0}\eps_2$ for complex~$b^2$.
  By mapping $(i,j)\to(\lambda_j+1-i,j)$, the numerator product is equal to
  \be
  \prod_{(i,j)\in\lambda} E(-\eps_2, \lambda, \emptyset; i,j)
  = \prod_{(i,j)\in\lambda} \bigl(\eps_1 j + \eps_2 (i-1)\bigr) .
  \ee
  On the other hand, $E(0, \lambda, \lambda; i,j) = -\eps_1 (\lambda'_i-j) + \eps_2 (\lambda_j + 1 - i)$.
  By \eqref{equiv-norm}, we deduce (the factors of $|\eps_2|$ cancel out)
  \be
  \prod_{(i,j)\in\lambda} \Bigl| \frac{E(-\eps_2, \lambda, \emptyset; i,j)}{E(0, \lambda, \lambda; i,j)} \Bigr|
  \leq (C_{b^2})^{2|\lambda|}
  \prod_{(i,j)\in\lambda} \frac{j + i - 1}{h_\lambda(i,j)}
  \leq (2 C_{b^2}^2)^{|\lambda|} .
  \ee
  The bound~\eqref{complex-b2-zfun} on the $I$-th fundamental contribution is then immediate by bounding $|Y_{I\alpha}|\leq n_I$ for each~$\alpha$.
\end{proof}

\subsection{Vectors and bifundamentals: pairs of partitions}

Quivers also feature bifundamental hypermultiplets, whose contribution to $\Zinst$ involves pairs of partitions $Y_{I\alpha}$ and $Y_{J\beta}$ for edges $(I,J)\in V_1$ of the quiver.  Lemma~\ref{lem:bif-1} states an upper bound for a single such pair of partitions and Lemma~\ref{lem:bif-2} obtains a lower bound on a product of two such contributions.  Proposition~\ref{prop:bif} then assembles these into bounds on~$\zbif$ and~$\zvec$.  This enables us to finish proving Theorem~\ref{thm:quivers}, namely that instanton partition sums for unitary quivers converge absolutely in a product of disks (or of the whole complex plane in the asymptotically free case).

\begin{lemma}\label{lem:bif-1}
  For $b^2=\eps_1/\eps_2\in\CC\setminus\RR$, and $A\in\CC\setminus(\ZZ\eps_1+\ZZ_{<0}\eps_2)$, there exists constants $C_1,C_2>0$ with $C_1$ depending only on~$b^2$ and $C_2$~depending on $\eps_1,\eps_2,A$, such that for any partition~$\lambda,\mu$ (and denoting $p = |\lambda|+|\mu|$),
  \bel{complex-b2-bif-1}
  \prod_{(i,j)\in\lambda} \Bigl| \frac{E(A, \lambda, \mu; i,j)}{E(0, \lambda, \lambda; i,j)} \Bigr|
  \leq e^{C_1 p+C_2\sqrt{p}(1+\log p)} .
  \ee
\end{lemma}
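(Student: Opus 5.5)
The plan is to reduce to a single convenient value of~$A$, pass to integer quantities via the norm equivalence~\eqref{equiv-norm}, and then bound the resulting integer ratio column by column with binomial coefficients, as in the proof of Lemma~\ref{lem:SQCD-off-diag}.

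\emph{Step 1: fix $A=0$.} By Proposition~\ref{prop:irrelevant-masses} (the main case, for general $\mu$), replacing $A$ by another admissible value costs only a factor $e^{C\sqrt{|\lambda|}(1+\log|\lambda|)}$. This holds provided both values avoid $\overline{\ZZ\eps_1+\ZZ_{<0}\eps_2}$. For $b^2\notin\RR$ that set is discrete, hence closed, so the hypothesis on $A$ is exactly what is needed. The value $A=0$ is admissible. It therefore suffices to prove~\eqref{complex-b2-bif-1} for $A=0$, with a purely exponential bound $e^{C_1p}$. This step produces the $C_2\sqrt p(1+\log p)$ term, and $C_2$ depends on $A$ and $\eps_{1,2}$ as stated.

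\emph{Step 2: reduce to integers.} For $(i,j)\in\lambda$ we have
\be
E(0,\lambda,\mu;i,j)=-\eps_1(\mu'_i-j)+\eps_2 L_{ij}, \qquad E(0,\lambda,\lambda;i,j)=-\eps_1(\lambda'_i-j)+\eps_2 L_{ij},
\ee
where $L_{ij}\coloneqq\lambda_j-i+1\geq 1$. Applying~\eqref{equiv-norm} to numerator and denominator, the factors $|\eps_2|$ cancel and
\be
\prod_{(i,j)\in\lambda}\Bigl|\frac{E(0,\lambda,\mu;i,j)}{E(0,\lambda,\lambda;i,j)}\Bigr| \leq C_{b^2}^{2|\lambda|}\prod_{(i,j)\in\lambda}\frac{|\mu'_i-j|+L_{ij}}{(\lambda'_i-j)+L_{ij}} .
\ee
The prefactor $C_{b^2}^{2|\lambda|}$ depends only on $b^2$ and is absorbed into $e^{C_1p}$.

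\emph{Step 3: bound the integer product one column at a time.} Fix $i$ and set $\alpha_i=\mu'_i-\lambda'_i$ and $k=\lambda'_i-j$. As $j$ runs over $1,\dots,\lambda'_i$, the integer $k$ takes each value $0,\dots,\lambda'_i-1$ exactly once. Since $L_{ij}\geq 1$ and $|\alpha_i+k|\leq|\alpha_i|+k$, each factor satisfies
\be
\frac{|\alpha_i+k|+L_{ij}}{k+L_{ij}}\leq 1+\frac{|\alpha_i|}{k+L_{ij}}\leq 1+\frac{|\alpha_i|}{k+1} .
\ee
Taking the product over $k$ gives
\be
\prod_{k=0}^{\lambda'_i-1}\frac{k+1+|\alpha_i|}{k+1}=\binom{\lambda'_i+|\alpha_i|}{|\alpha_i|}\leq 2^{\lambda'_i+|\alpha_i|} .
\ee
Summing over columns,
\be
\sum_i\bigl(\lambda'_i+|\mu'_i-\lambda'_i|\bigr)\leq \sum_i\bigl(2\lambda'_i+\mu'_i\bigr)=2|\lambda|+|\mu|\leq 2p .
\ee
So the integer product is at most $4^{p}$. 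Combined with Step~2 this yields $C_1=\log 4+2\log C_{b^2}$, which depends only on $b^2$.

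The step I expected to be the main obstacle is the combinatorial comparison in Step~3. An approach based on $\lambda\cap\mu$, $\lambda\Delta\mu$ and the hook/cohook inequality of Lemma~\ref{lem:cohook-hook} looked necessary at first. The per-column monotonicity in $k$ avoids this, because the $\lambda_j$ dependence enters numerator and denominator through the same $L_{ij}\geq 1$. The only point to double-check is Step~1: that the main case of Proposition~\ref{prop:irrelevant-masses} applies for arbitrary $\mu$ with the closed set $\ZZ\eps_1+\ZZ_{<0}\eps_2$, and that $0$ lies outside it.
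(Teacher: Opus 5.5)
Your proof is correct, and it is simpler than the paper's in the combinatorial step. Steps 1 and 2 match the paper: reduce to $A=0$ via Proposition~\ref{prop:irrelevant-masses}, then pass to integers with \eqref{equiv-norm}.

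The paper then splits the boxes of $\lambda$ into $\lambda\cap\mu$ and $\lambda\setminus(\lambda\cap\mu)$. On the intersection it uses your per-column binomial estimate, but only for columns with $\mu'_i>\lambda'_i$; this gives $\binom{\mu'_i}{\lambda'_i}$, and factors $\leq 1$ otherwise. On the complement, where $\mu'_i<j$, it rewrites the numerator as the semi-hook $j-\mu'_i+\lambda_j+1-i$ of the skew shape. It then cites as a known fact, without proof, that the semi-hook product is at most an exponential multiple of the hook product.

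Your inequality $|\alpha_i+k|\leq|\alpha_i|+k$ (valid since $k=\lambda'_i-j\geq 0$), combined with $L_{ij}\geq 1$ and $|\alpha_i|\leq\lambda'_i+\mu'_i$, handles both regions at once. So the split and the outside hook-product input are unnecessary, and you get an explicit $C_1=\log 4+2\log C_{b^2}$. On the complement your estimate even has room to spare, since there $|\alpha_i+k|\leq|\alpha_i|$.

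The paper's decomposition mainly buys parallelism with Lemma~\ref{lem:bif-2}. For the inverse ratio one needs a lower bound on $|\mu'_i-j|+L_{ij}$, and this fails box by box on $\lambda\setminus(\lambda\cap\mu)$. For example, at $j=\mu'_i+1$ it equals $1+L_{ij}$, while $h_\lambda(i,j)$ can be much larger. Some pairing of boxes is therefore unavoidable there, and your triangle-inequality shortcut does not transfer to that lemma.

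A minor remark on Step~1: for this upper bound, only the reference value $0$ needs to avoid $\ZZ\eps_1+\ZZ_{<0}\eps_2$. It is the value that sits in the denominator when you apply Proposition~\ref{prop:irrelevant-masses} to $\prod E(A,\lambda,\mu)/E(0,\lambda,\mu)$. The hypothesis on $A$ is never used, so requiring both values to avoid the set is sufficient but more than needed.
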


\begin{proof}
As in previous proofs, Proposition~\ref{prop:irrelevant-masses} lets us focus on a single value of~$A$.  Since we have a pair of partitions here, this reference value must be taken to be $A \not\in \ZZ\eps_1+\ZZ_{<0}\eps_2$.  We choose $A=0$.

Thanks to~\eqref{equiv-norm}, the product of interest is bounded by the one without $b^2$ dependence:
\bel{eq410}
\biggl| \prod_{(i,j)\in\lambda} \frac{E(0, \lambda, \mu; i,j)}{E(0, \lambda, \lambda; i,j)} \biggr|
\leq (C_{b^2})^{2|\lambda|} \prod_{(i,j)\in\lambda} \frac{|\mu'_i - j| + \lambda_j + 1 - i}{\lambda'_i - j + \lambda_j + 1 - i} .
\ee
Here we included absolute values for $\mu'_i - j$ since that can have either sign.
The denominator is $h_\lambda(i,j)$.

Next, we introduce the Young diagram $\lambda\cap\mu$ consisting of boxes that are in both $\lambda$ and~$\mu$:
\be
(\lambda\cap\mu)_j = \min(\lambda_j,\mu_j) , \qquad
(\lambda\cap\mu)'_i = \min(\lambda'_i,\mu'_i) .
\ee
We shall treat separately the products over boxes in $\lambda\cap\mu$ and those in $\lambda\setminus(\lambda\cap\mu)$.

Consider the set of boxes $(i,j)\in\lambda\setminus(\lambda\cap\mu)$, which is equivalent to $\mu'_i<j$ or equivalently $\mu_j<i$.  The numerator corresponding to such a box is $(j - \mu'_i + \lambda_j + 1 - i)$, which is the semi-hook in the skew diagram $\lambda\setminus(\lambda\cap\mu)$. The product of semi-hooks is known to bound the hook product and be bounded by an exponential multiple thereof, so
\be
\prod_{(i,j)\in \lambda\setminus(\lambda\cap\mu)} (|\mu'_i - j| + \lambda_j + 1 - i)
\leq e^{C|\lambda|} \prod_{(i,j)\in \lambda\setminus(\lambda\cap\mu)} (\lambda'_i - j + \lambda_j + 1 - i) .
\ee

Consider next the boxes in $\lambda\cap\mu$.  Note that $\mu'_i-j\geq 0$ for these boxes, which allows us to remove all absolute values in~\eqref{eq410} and ask for a bound on
\bel{eq417}
\prod_{(i,j)\in\lambda\cap\mu} \frac{\mu'_i - j + \lambda_j + 1 - i}{\lambda'_i - j + \lambda_j + 1 - i}
= \prod_{(i,j)\in\lambda\cap\mu} \Bigl( 1 + \frac{\mu'_i - \lambda'_i}{h_\lambda(i,j)} \Bigr) .
\ee
For each row~$i$, if $\mu'_i\leq\lambda'_i$ the factor in parentheses is in $\leq 1$.  Thus, the contribution of all such rows is $\leq 1$.
If instead $\mu'_i>\lambda'_i$ we reuse the same argument as in~\eqref{Pmi-bound} to get a binomial coefficient upper bound since $h_\lambda(i,j)$ is strictly monotonically decreasing in~$j$ with integer values:
\be
\prod_{j=1}^{\lambda'_i} \frac{\mu'_i - j + \lambda_j + 1 - i}{\lambda'_i - j + \lambda_j + 1 - i}
\leq \prod_{d=1}^{\lambda'_i} \frac{d + \mu'_i - \lambda'_i}{d} = \binom{\mu'_i}{\lambda'_i} \leq 2^{\mu'_i} ,
\qquad \text{when $\lambda'_i < \mu'_i$}.
\ee
Altogether this gives the desired upper bound.
\end{proof}

\begin{lemma}\label{lem:bif-2}
  For $b^2=\eps_1/\eps_2\in\CC\setminus\RR$, and $A\in\CC\setminus(\ZZ\eps_1+\ZZ_{<0}\eps_2)$, there exists constants $C_1,C_2>0$ with $C_1$ depending only on~$b^2$ and $C_2$~depending on $\eps_1,\eps_2,A$, such that for any partition~$\lambda,\mu$ (and denoting $p = |\lambda|+|\mu|$),
  \bel{complex-b2-bif-2}
  \prod_{(i,j)\in\lambda} \Bigl| \frac{E(0, \lambda, \lambda; i,j)}{E(A, \lambda, \mu; i,j)} \Bigr|
  \leq e^{C_1 p+C_2\sqrt{p}(1+\log p)} .
  \ee
\end{lemma}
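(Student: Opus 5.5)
The plan mirrors Lemma~\ref{lem:bif-1} with the roles of numerator and denominator exchanged. \textbf{Step 1: reduce to $A=0$.} Proposition~\ref{prop:irrelevant-masses}, applied with $(A,B)\to(0,A)$, is admissible because the hypothesis is exactly that $A$ avoids $\ZZ\eps_1+\ZZ_{<0}\eps_2$, which is closed (discrete) for $b^2\notin\RR$. It shows that
\[
\prod_{(i,j)\in\lambda}\Bigl|\frac{E(0,\lambda,\mu;i,j)}{E(A,\lambda,\mu;i,j)}\Bigr|
\]
is at most $e^{C_2\sqrt{|\lambda|}(1+\log|\lambda|)}$. So it suffices to treat $A=0$. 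All factors are then non-zero: $E(0,\lambda,\mu;i,j)=-\eps_1(\mu'_i-j)+\eps_2(\lambda_j-i+1)$ with $\lambda_j-i+1\geq1$ and $b^2\notin\RR$.

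\textbf{Step 2: remove $b^2$.} By \eqref{equiv-norm},
\[
\prod_{(i,j)\in\lambda}\Bigl|\frac{E(0,\lambda,\lambda;i,j)}{E(0,\lambda,\mu;i,j)}\Bigr|
\leq C_{b^2}^{2|\lambda|}\prod_{(i,j)\in\lambda}\frac{h_\lambda(i,j)}{|\mu'_i-j|+\lambda_j+1-i}.
\]
We then split the boxes into $\lambda\cap\mu$ and $\lambda\setminus(\lambda\cap\mu)$, as in Lemma~\ref{lem:bif-1}.

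\textbf{Step 3: boxes in $\lambda\cap\mu$.} Here $\mu'_i\geq j$, so each ratio equals $h_\lambda(i,j)/(h_\lambda(i,j)+\mu'_i-\lambda'_i)$. Rows with $\mu'_i\geq\lambda'_i$ contribute at most~$1$. For a row with $\alpha=\lambda'_i-\mu'_i>0$, the relevant range is $j\leq\mu'_i$. The denominators $h_\lambda(i,j)-\alpha$ are integers $\geq1$ (they equal the previous numerators) and strictly decreasing in~$j$. Since $(d+\alpha)/d$ decreases in $d$, the same monotonicity trick as in \eqref{Pmi-bound} gives a row product at most
\[
\prod_{d=1}^{\mu'_i}\frac{d+\alpha}{d}=\binom{\lambda'_i}{\alpha}\leq 2^{\lambda'_i}.
\]
Multiplying over rows gives at most $2^{|\lambda|}$.

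\textbf{Step 4: boxes in $\lambda\setminus\mu$ (main obstacle).} Here the denominator is $j-\mu'_i+\lambda_j+1-i$, the hook of $(i,j)$ inside the skew shape $\lambda/(\lambda\cap\mu)$. So I need the reverse of the inequality used in Lemma~\ref{lem:bif-1}: the product of $\lambda$-hooks over a skew region $\nu/\kappa$ (with $\nu=\lambda$, $\kappa=\lambda\cap\mu$) should be at most $e^{C|\lambda|}$ times the product of the skew hooks.

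I would try first to deduce this from the Morales--Pak--Panova results, as in Lemma~\ref{lem:cohook-hook}. The lower bound $e(\nu/\kappa)\geq F(\nu/\kappa)=|\nu/\kappa|!/\prod h_\nu$ from \eqref{eq43} controls $\prod h_\nu$ from below. The trivial bound $e(\nu/\kappa)\leq|\nu/\kappa|!/\prod(\text{skew hooks})$ should hold for each connected, or at least ``hook-like'', piece, since a skew hook lower-bounds the number of boxes forced after a given box. Combining the two would give the ratio $\leq1$, or at worst $e^{O(p)}$.

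If that route fails, the fallback is a direct row-by-row argument. In row $i$, write $k=\lambda'_i-j\in[0,s)$ with $s=\lambda'_i-\mu'_i$, and let $a_k=\lambda_j-i+1\geq1$, which is nondecreasing in~$k$. The ratio is then $(k+a_k)/(s-k+a_k)$. I would pair $k$ with $s-1-k$ and use the monotonicity of $a_k$ to bound each pair by $O(1)$, or the whole row by $2^{O(s)}$. The total $\sum_i s_i\leq|\lambda|$ then yields $e^{C_1p}$.

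Finally, collecting $C_{b^2}^{2|\lambda|}$, $2^{|\lambda|}$, the Step~4 constant and the subexponential factor from Step~1 gives \eqref{complex-b2-bif-2}.
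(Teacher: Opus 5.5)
Your Steps 1--3 are the paper's argument: reference value $A=0$, removal of $b^2$ via \eqref{equiv-norm}, and the binomial bound on $\lambda\cap\mu$.

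\textbf{The first route in Step~4 does not work.} The denominator $S(i,j)=j-\mu'_i+\lambda_j+1-i$ is not a hook of the skew shape. It counts boxes after $(i,j)$ in one direction, boxes before it in the other, and $(i,j)$ itself twice. Your ``trivial bound'' $e\le t!/\prod S$ (with $t=|\lambda\setminus\mu|$) is therefore false, already for $\lambda=(1)$, $\mu=\emptyset$, where $e=1$ but $t!/S=1/2$.

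More importantly, even if it held, combining it with $e\ge F=t!/\prod h_\lambda$ gives $\prod S\le\prod h_\lambda$. That is a \emph{lower} bound on the ratio $\prod h_\lambda/\prod S$, which Step~4 must bound from above. You note yourself that $e\ge F$ controls $\prod h_\lambda$ from below, which is useless here. An MPP argument needs the other half of \eqref{eq43}, namely $e\le 2^tF$, together with a lower bound on $e$ in terms of the denominators.

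The paper obtains that lower bound as follows. It pairs $(i,j)$ with $(\lambda_j+\mu_j+1-i,j)$ inside each column. It then uses $(a+b)(\tilde a+\tilde b)\le(a+\tilde b)(\tilde a+b)$ for $0\le a\le\tilde a$, $0\le b\le\tilde b$ to replace the semi-hooks by one plus the cohooks of the skew shape. These cohooks are the hooks entering $F$ for the $180^\circ$ rotation of the skew shape, which has the same number of standard tableaux. MPP's lower bound for the rotated shape then closes the argument up to a factor $4^t$.

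\textbf{Your fallback does work, and is simpler than the paper's argument,} once the vague step is made precise. A pair cannot be bounded factor by factor: the factor at $\bar k=s-1$ alone can be as large as $s/2$. The same rearrangement inequality handles the pair. With $\bar k=s-1-k$, the denominators at $k$ and $\bar k$ are $\bar k+1+a_k$ and $k+1+a_{\bar k}$. For $k\le\bar k$, hence $a_k\le a_{\bar k}$,
\[
(k+a_k)(\bar k+a_{\bar k})\le(k+a_{\bar k})(\bar k+a_k)<(k+1+a_{\bar k})(\bar k+1+a_k),
\]
and the unpaired middle factor for odd $s$ is $(k+a_k)/(k+1+a_k)<1$. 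So every row of $\lambda\setminus\mu$ contributes less than~$1$, and the whole skew part is at most~$1$.

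The two routes compare the semi-hook to different quantities:
\begin{itemize}
\item Your row pairing reflects the co-arm $j-\mu'_i$ into the arm $\lambda'_i-j$ within $(\mu'_i,\lambda'_i]$ while keeping the leg part $a_k$. It therefore compares semi-hooks directly with $\lambda$-hooks.
\item The paper's column pairing compares semi-hooks with cohooks, and then needs the nontrivial Morales--Pak--Panova hook/cohook comparison for skew shapes.
\end{itemize}
Your version is elementary, self-contained and sharper (constant $1$ instead of $4^t$). The paper's version reuses the machinery already set up for Lemma~\ref{lem:cohook-hook}. Present the fallback with the displayed inequality and drop the first route.
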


\begin{proof}
Note that this is the inverse ratio as in Lemma~\ref{lem:bif-1}.
The proof is similar to Lemma~\ref{lem:bif-1}: we eliminate the $b^2$ dependence and we split boxes into $\lambda\cap\mu$ and $\lambda\setminus(\lambda\cap\mu)$.
For boxes in $\lambda\cap\mu$ the bound by binomial coefficients is easy to get with the same strategy as for Lemma~\ref{lem:bif-1}.  This gives
\be
\prod_{j=1}^{\lambda'_i} \frac{\lambda'_i - j + \lambda_j + 1 - i}{\mu'_i - j + \lambda_j + 1 - i}
\leq \binom{\max(\mu'_i,\lambda'_i)}{\mu'_i} \leq 2^{\lambda'_i} .
\ee

Consider the set of boxes $(i,j)\in\lambda\setminus(\lambda\cap\mu)$, which is equivalent to $\mu'_i<j$ or equivalently $\mu_j<i$.  In a given column~$j$, we consider two boxes $(i,j)$ and $(\iti,j)$ with
\be
\iti = \lambda_j + \mu_j + 1 - i .
\ee
Both boxes are in $\lambda\setminus(\lambda\cap\mu)$.
The product of denominators corresponding to these two boxes is
\beal
& (|\mu'_i - j| + \lambda_j + 1 - i) (|\mu'_{\iti} - j| + \lambda_j + 1 - \iti)
\\
& \quad = (j-\mu'_i + \iti - \mu_j) (j-\mu'_{\iti} + i - \mu_j)
\\
& \quad \geq (j-\mu'_i + i - \mu_j) (j-\mu'_{\iti} + \iti - \mu_j)
\ee
where we used that $(a+b)(\at+\bt) \leq (a+\bt)(\at+b)$ for $0\leq a\leq \at$ and $0\leq b\leq\bt$.
By taking a product over boxes in $\lambda\setminus(\lambda\cap\mu)$ (and taking the square root to avoid double-counting), we obtain
\bel{eq414}
\prod_{(i,j)\in\lambda\setminus(\lambda\cap\mu)} \bigl(|\mu'_i - j| + \lambda_j + 1 - i\bigr)
\geq \prod_{(i,j)\in\lambda\setminus(\lambda\cap\mu)} (j-\mu'_i + i - \mu_j) .
\ee
For these boxes $(i,j)$, one has $\mu'_i = (\lambda\cap\mu)'_i$ and $\mu_j = (\lambda\cap\mu)_j$, so the last product is a product of (one plus) cohooks in the skew Young diagram $\lambda\setminus(\lambda\cap\mu)$.  To be precise, cohooks are $j-\mu'_i + i - \mu_j - 1$ instead, which is smaller.

We now deploy the same argument as below~\eqref{eq43}.  The number $e(\lambda\setminus(\lambda\cap\mu))$ of standard Young tableaux with skew shape $\lambda\setminus(\lambda\cap\mu)$ and with its $180$-degree rotation are the same.  They differ from the corresponding $F(\dots)$ by a factor of $2^{\pm t}$ so the corresponding $F(\dots)$ differ by at most a factor of~$4^t$.  This means that the product of hooks in $\lambda\setminus(\lambda\cap\mu)$ differs from the product of cohooks by~$4^t$.  Since the hook of a given box $(i,j)$ in the skew shape matches the hook in~$\lambda$, we get
\be
\prod_{(i,j)\in\lambda\setminus(\lambda\cap\mu)} \bigl(|\mu'_i - j| + \lambda_j + 1 - i\bigr)
\geq 2^{-2t} \prod_{(i,j)\in\lambda\setminus(\lambda\cap\mu)} h_\lambda(i,j) .
\ee
\end{proof}

\begin{proposition}\label{prop:bif}
  For $b^2=\eps_1/\eps_2\in\CC\setminus\RR$, the contribution of the vector multiplet of $U(N_I)$ in a term $(Y_1,\dots,Y_K)$ of the instanton partition function in \autoref{subsec:Zinst} is bounded as
  \beall{complex-b2-zvec}
  & |\zvec(a_I,Y_I)|
 \leq e^{C_1 p} e^{C_2\sqrt{p}(1+\log p)} \prod_{\alpha=1}^{N_I} \prod_{(i,j)\in Y_{I\alpha}} |E(0, Y_{I\alpha}, Y_{I\alpha}; i,j)|^{-2N_I} ,
  \ee
  and the contribution of the bifundamental hypermultiplet of $U(N_I)$ and $U(N_J)$ (or adjoint hypermultiplet for $I=J$) is bounded as
  \beall{complex-b2-zbif}
  & |\zbif(a_I,Y_I,a_J,Y_J;m)|
  \\
  & \quad \leq e^{C_1 p} e^{C_2\sqrt{p}(1+\log p)} \prod_{1\leq\alpha\leq N_I} \prod_{(i,j)\in Y_{I\alpha}} |E(0, Y_{I\alpha}, Y_{I\alpha}; i,j)|^{N_J}
  \\
  & \qquad\qquad\qquad\qquad\quad \times \prod_{1\leq\beta\leq N_J} \prod_{(i,j)\in Y_{J\beta}} |E(0, Y_{J\beta}, Y_{J\beta}; i,j)|^{N_I} ,
  \ee
  where $p=|Y_I|+|Y_J|$, the constant $C_1>0$ depends on $b^2,N_I,N_J$, and the constant $C_2>0$ depends on $N_I,N_J,\eps_1,\eps_2$ and on masses and Coulomb branch parameters, but not on the Young diagrams.
\end{proposition}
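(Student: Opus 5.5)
The plan is to apply Lemma~\ref{lem:bif-1} and Lemma~\ref{lem:bif-2} factor by factor, after unpacking $\zbif$ and $\zvec$ into products over ordered pairs of Young diagrams. Write $\zbif(a_I,Y_I,a_J,Y_J;m)$ as a product over $\alpha\leq N_I$ and $\beta\leq N_J$ of two products:
\begin{itemize}
\item the product over boxes of $Y_{I\alpha}$ of $E(a_{I\alpha}-a_{J\beta}-m,Y_{I\alpha},Y_{J\beta};i,j)$;
\item the product over boxes of $Y_{J\beta}$ of $E(a_{J\beta}-a_{I\alpha}+m-\eps_1-\eps_2,Y_{J\beta},Y_{I\alpha};i,j)$, up to sign.
\end{itemize}
Each is exactly of the form in Lemma~\ref{lem:bif-1}, with shift $A$ a mass or Coulomb combination and $(\lambda,\mu)=(Y_{I\alpha},Y_{J\beta})$ or the reverse. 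The hypothesis $A\notin\ZZ\eps_1+\ZZ_{<0}\eps_2$ is implied by the genericity assumption of Theorem~\ref{thm:quivers}, since $A$ avoids the whole lattice.

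Applying Lemma~\ref{lem:bif-1} to each of the $N_IN_J$ pairs, in both orientations, bounds the first product by $e^{C_1p+C_2\sqrt p(1+\log p)}\prod_{(i,j)\in Y_{I\alpha}}|E(0,Y_{I\alpha},Y_{I\alpha};i,j)|$, and the second product similarly. Here I use $|Y_{I\alpha}|+|Y_{J\beta}|\leq p$ and monotonicity of the exponent in~$p$. Taking the product over $\beta$ gives the power $N_J$ on the $Y_{I\alpha}$ diagonal factors, and the product over $\alpha$ gives the power $N_I$ on the $Y_{J\beta}$ ones. The constants multiply by $2N_IN_J$, so $C_1$ depends on $b^2,N_I,N_J$ and $C_2$ also on the parameters. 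This is \eqref{complex-b2-zbif}.

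The adjoint case $I=J$ goes through identically, with one exception: the diagonal pairs $\alpha=\beta$ have $A=m$ or $A=-m-\eps_1-\eps_2$, and $\mu=\lambda$. There I would instead use the $\mu=\lambda$ case of Proposition~\ref{prop:irrelevant-masses} to shift the argument to $0$. Its condition is $A\notin\ZZ_{\geq0}\eps_1+\ZZ_{<0}\eps_2$, which genericity of the adjoint mass should guarantee. If it is not literally covered by the hypotheses, I would note that $m\notin\Lambda$ is needed.

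For $\zvec=\zadj(a,Y;0)^{-1}$, the inverse of each factor has the same form but sits in the denominator, so Lemma~\ref{lem:bif-2} applies to off-diagonal pairs $\alpha\neq\beta$. There the shifts $a_{I\alpha}-a_{I\beta}$ and $a_{I\beta}-a_{I\alpha}-\eps_1-\eps_2$ avoid the lattice by assumption, and the lemma bounds $\prod 1/|E|$ by $e^{\dots}\prod 1/|E(0,\lambda,\lambda)|$. For the diagonal pairs $\alpha=\beta$, one factor is already $E(0,\lambda,\lambda)$. The other has shift $-\eps_1-\eps_2$, and I would convert it to shift $0$ using the $\mu=\lambda$ case of Proposition~\ref{prop:irrelevant-masses} in the inverse direction, taking $A=-\eps_1-\eps_2$ and $B=0$. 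This needs $0\notin\ZZ_{\geq0}\eps_1+\ZZ_{<0}\eps_2$, which holds. Counting factors, each $Y_{I\alpha}$ receives $2N_I$ inverse diagonal factors, giving \eqref{complex-b2-zvec}.

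The main obstacle is purely bookkeeping: checking that every shift meets the precise avoidance condition of the lemma or proposition being invoked, especially in the diagonal adjoint and vector cases. The off-diagonal and bifundamental cases are direct consequences of the lemmas.
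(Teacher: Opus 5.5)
Your proposal follows the paper's route: the paper's proof just says ``apply Lemmas~\ref{lem:bif-1} and~\ref{lem:bif-2},'' and your factor-by-factor bookkeeping (powers $N_J$, $N_I$, $2N_I$) matches. Your separate treatment of the diagonal vector factor with shift $-\eps_1-\eps_2$ is actually needed, because $-\eps_1-\eps_2\in\ZZ\eps_1+\ZZ_{<0}\eps_2$ violates the hypothesis of Lemma~\ref{lem:bif-2}. The paper handles exactly this factor via Proposition~\ref{prop:irrelevant-masses} in the proof of Proposition~\ref{prop:SQCD-off}.

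However, that step is oriented backwards as you wrote it. Proposition~\ref{prop:irrelevant-masses} bounds $\prod|E(A,\dots)/E(B,\dots)|$, and its avoidance condition is on the \emph{denominator} parameter~$B$. Your choice $A=-\eps_1-\eps_2$, $B=0$ bounds $\prod|E(-\eps_1-\eps_2,\lambda,\lambda)|$ above by $\prod|E(0,\lambda,\lambda)|$. That is a lower bound on $\prod 1/|E(-\eps_1-\eps_2,\lambda,\lambda)|$, the wrong direction for an upper bound on $|\zvec|$. You need $A=0$ and $B=-\eps_1-\eps_2$, with the condition $-\eps_1-\eps_2\notin\ZZ_{\geq0}\eps_1+\ZZ_{<0}\eps_2$. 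This condition does hold: $-\eps_1-\eps_2=k\eps_1+l\eps_2$ with $k\geq 0$ would force $b^2=-(l+1)/(k+1)\in\RR$. So the step is repaired at no cost.

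The same mix-up of $A$ and $B$ appears in your adjoint discussion. Bounding $\prod|E(m,\lambda,\lambda)/E(0,\lambda,\lambda)|$ puts the condition on $B=0$, which is automatic, so no condition on~$m$ is needed. In any case the detour is unnecessary. Lemma~\ref{lem:bif-1} holds for all pairs including $\mu=\lambda$, and the hypothesis of Theorem~\ref{thm:quivers} with $I=J$, $\alpha=\beta$ already gives $m\notin\Lambda(\eps_1,\eps_2)$.
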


\begin{proof}
  This is an immediate application of Lemmas~\ref{lem:bif-1} and~\ref{lem:bif-2}.
\end{proof}

\paragraph{Finishing the proof of Theorem~\ref{thm:quivers}.}

In Propositions~\ref{prop:quiver-fun} and~\ref{prop:bif} we have proven upper bounds on the fundamental, bifundamental, and vector multiplet contributions.  Now we simply assemble them: denoting $n=|Y_1|+\dots+|Y_K|$ the total instanton number, we find that there are constants $C_1,C_2>0$ with $C_1$ depending only on~$b^2$ and the quiver, and $C_2$ depending in addition on mass/Coulomb parameters, such that
\beal
|\Zinst|
& \leq \sum_{Y_1,\dots,Y_K} |q_1|^{|Y_1|} \dotsm |q_K|^{|Y_K|} |Z_Y(m,a)|
\\
& \leq \sum_{Y_1,\dots,Y_K} |q_1|^{|Y_1|} \dotsm |q_K|^{|Y_K|} e^{C_1 n} e^{C_2 \sqrt{n} (1+\log n)} \\[-2ex]
& \qquad\qquad\qquad\qquad \times \prod_{I=1}^K \prod_{\alpha=1}^{N_I} \prod_{(i,j)\in Y_{I\alpha}} |E(0,Y_{I\alpha},Y_{I\alpha};i,j)|^{-b_I} ,
\ee
where $b_I$ is the balance of node~$I$ defined in~\eqref{balance-ineq}.
The theories of interest have $b_I\geq 0$.
Assume now that for each node~$I$,
\be
b_I > 0 \text{ or } |q_I| < e^{-C_1} .
\ee
Then this series converges by the same argument as the end of~\autoref{sec:SQCD-neg}.
Indeed, the factors $|E(0,Y_{I\alpha},Y_{I\alpha};i,j)|^{-b_I}$ of unbalanced nodes decay superexponentially in the instanton number $n_I=|Y_I|$, which compensates the power $|q_I|^{n_I} e^{C_1 n_I}$ regardless of~$q_I$, while for balanced nodes the factor $|q_I|^{n_I} e^{C_1 n_I} = |q_I e^{C_1}|^n$ gives a geometric series.

\section{Outlook}

The results established in this paper with coarse combinatorial techniques leave a few questions open about the class of quiver theories considered here.
Concerning SQCD, it seems possible to show absolute convergence with the optimal radius $R=1$ for $b^2<0$ by improving the comparison of off-diagonal and diagonal vector multiplet contributions in \autoref{sec:SQCD-off-diag}, by combining the $(\alpha,\beta)$ and $(\beta,\alpha)$ off-diagonal contributions and finding a better bijection of boxes.
For non-real~$b^2$, we should expect only conditional convergence with radius~$1$, but absolute convergence only in a smaller disk: a brief numerical exploration suggests that large partitions with certain limit shapes that extremize $Z_Y(m,a)^{1/|Y|}$ (which should be a restriction of those in Nekrasov--Okounkov~\cite{hep-th/0306238}) force the absolute convergence radius to depend on~$b^2$.

Quivers with unitary gauge groups should behave as SQCD\@.  To get a positive radius for real $b^2<0$, we need an upper bound on the bifundamental hypermultiplet contribution, which amounts to finding a lower bound for the ratio in Lemma~\ref{lem:SQCD-off-diag}.  In contrast to the upper bound, the lower bound inevitably depends on both partitions (as seen by considering the large~$\mu$ limit), which may lead to difficulties.

Similar techniques should prove convergence for all Lagrangian theories classified in~\cite{1309.5160} for which Nekrasov partition functions are known as explicit sums over Young diagrams.  The density bounds of Lemma~\ref{lem:bound-density} for instance may be usefully rephrased as a lack of concentration of eigenvalues in the matrix models describing the $n$-instanton partition function.  A potential difficulty for gauge groups other than $U(N)$ is the presence of higher-order poles in the matrix model description.

A major remaining challenge is to understand convergence of the instanton series~$\Zinst$ in the case $b^2>0$, which appears to be entirely open, except for the 4d $\Nsusy=2^*$ theory~\cite{2602.19425,2003.03802}.  This is the most natural regime from the point of view of the AGT correspondence, since it corresponds to conformal blocks in unitary Liouville/Toda CFTs.
The crucial question is whether summing over all tuples with $k$ boxes cures the runaway growth of some contributions.
One hope would be to try and rephrase the bounds at the level of the rank~$n$ matrix integral expression of~$Z_n$, rather than at the level of the sum over poles (labelled by tuples of Young diagrams).

Another approach could be to start with a non-zero imaginary part $\Im(b^2)$ where arbitrary reorderings are possible.  It may be possible to resum the series over Young diagrams (e.g., through a relation with instanton-vortex partition functions) and find an upper bound on $|\Zinst|$ that is uniformly bounded as $b^2$ tends to a specific value in $(0,+\infty)$, for fixed $|q|<1$.

The K-theoretic Nekrasov partition function of five-dimensional uplifts of these theories are also of interest.

\section*{Acknowledgements}

The author thanks Sylvain Lacroix for discussions, and Fabrizio Del Monte, Harini Desiraju, Alba Grassi, and Vincent Vargas for organizing the ``Conformal field theory 3 ways: integrable, probabilistic, and supersymmetric'' SwissMAP conference in January 2024, which reignited his interest into these topics.

\bibliographystyle{plain}
\bibliography{ZinstRadius}

\end{document}